\newcounter{TempEqCnt}
\newtheorem{theorem}{Theorem}
\newtheorem{proposition}{Proposition}
\newtheorem*{proof}{\it{Proof:}}
\begin{document}
\bibliographystyle{IEEEtran}

\title{Two-phase Unsourced Random Access in Massive MIMO: Performance Analysis and Approximate Message Passing Decoder
\thanks{\scriptsize
The work was supported in part by the
National Natural Science Foundation of China under Grant 62171364 and 61941118. (\emph{Corresponding author: Hui-Ming
Wang.})}
\author{Jia-Cheng Jiang and Hui-Ming Wang, \emph{Senior Member}, \emph{IEEE}\hspace{0.02in}}
\thanks{\scriptsize
The authors are with the School of Information and Communications Engineering, Xi'an Jiaotong
University, Xi'an, 710049, Shaanxi, P. R. China, and also with the Ministry
of Education Key Lab for Intelligent Networks and Network Security, Xi'an, 710049, Shaanxi, P. R. China. (e-mail: j1143484496b@stu.xjtu.edu.cn; xjbswhm@gmail.com.)}
}
\maketitle

\begin{abstract}
 In this paper, we design a novel two-phase unsourced random access (URA) scheme in massive multiple input multiple output (MIMO). In the first phase, we collect a sequence of information bits to jointly acquire the user channel state information (CSI) and the associated information bits. In the second phase, the residual information bits of all the users are partitioned into sub-blocks with a very short length to exhibit a higher spectral efficiency and a lower computational complexity than the existing transmission schemes in massive MIMO URA. By using the acquired CSI in the first phase, the sub-block recovery in the second phase is cast as a compressed sensing (CS) problem. From the perspective of the statistical physics, we provide a theoretical framework for our proposed URA scheme to analyze the induced problem based on the replica method.
 The analytical results show that the performance metrics of our URA scheme can be linked to the system parameters by a single-valued free entropy function. An AMP-based recovery algorithm is designed to achieve the performance indicated by the proposed theoretical framework. Simulations verify that our scheme outperforms the most recent counterparts.

\end{abstract}
\begin{IEEEkeywords}
Unsourced Random Access, two-phase, massive MIMO, massive machine-type communication, replica method, statistical physics, approximate message passing decoder.
\end{IEEEkeywords}
\section{Introduction}
Massive machine-type communication (mMTC) is one of typical application scenarios in the fifth-generation (5G) and beyond 5G (B5G) wireless networks. The core mission of mMTC is to provide cellular connectivity to millions of low-rate machine-type devices for Internet of Things (IoT) applications \cite{ToMa2016, IoT5Gera2016}. Different from the traditional human-type communication (HTC), aiming to achieve high data rates with a large packet size, the machine-type communication (MTC) applications are typical uplink-driven with packet sizes as small as a few bits \cite{BockelmannMassive2016}. Another characteristic in mMTC is that the traffic pattern is typically sporadic with only a fraction of total user devices being active \cite{SenelGrant-Free2018, LiuMassive12018}.

Such features make the grant-free access control more favorable for mMTC than the grant-based access control. The grant-free access control requires a very low control overhead without the requirements of additional control signaling exchanges to facilitate the granting of resources. At the affordable cost of increased base station (BS) complexity, the collisions resolution mechanism can be designed by taking the sporadic traffic into consideration. One typical scenario for massive connectivity is that the BS first jointly detects the active users and estimates the corresponding channels, and then the identified active users are scheduled to transmit their messages \cite{LiuMassive12018}. Such a scenario is called \emph{sourced random access}, since the BS is required to identify the active users. The core mission of the sourced random access is to resolve the joint active detection and channel estimation (JADCE) problem. It is shown that the problem can be formulated as a sparse recovery problem, and some compressed sensing (CS) technologies for JADCE have recently been reported \cite{jiang2021, jiang20212}.


There is another line of work in the applications of mMTC, where the active user devices wish to transmit information bit sequences to the BS in an uncoordinated fashion, and the BS is only interested in the transmitted sequences but not user identifications. Such a manner is so-called \emph{unsourced random access} (URA). All the active users employ a same codebook and the decoder in the BS outputs an unordered list of the information bits. The URA framework was introduced by Polyanskiy in \cite{Polypersp}, which provided an existence bound for URA, using a random Gaussian codebook with maximum likelihood-decoding at the BS. Indeed, based on the CS-coding scheme, URA can be established as a support recovery in a very high dimension of the context of CS. However, the computational overhead of such the CS problem increases exponentially with the length of information bit sequences, and the excessive size precludes the straightforward application of existing solutions. As a consequence, several works have considered a divide-and-conquer approach to split the information sequences of active user devices into several sub-blocks with each of them processed separately, and finally, the information sub-blocks correspond to one original sequence are spliced together \cite{Amalladinne2020, Enhance2020, 2020Vamsi}. This approach was first introduced in \cite{Amalladinne2020}, where
each sub-block is coded by the CS coded scheme, and is so-called coded compressed sensing (CCS). Some redundant parity-check bits are added in the encoding phase for stitching. In the decoding process, the inner decoder is designed for recovering the collection of sub-blocks transmitted within a slot, which is amenable to CS recovery. The outer tree decoder recovers the original information sequences from all the users by stitching together valid sequences of elements drawn from the various CS lists. Authors in \cite{Enhance2020} further showed that the inner and outer decoders can be executed concurrently to achieve both performance improvement and complexity reduction. Then, a novel framework that facilitates dynamic interactions between the inner and outer decoding was presented in \cite{2020Vamsi} based on the message passing theory.

The authors in \cite{2019ale} extended the low-complexity CCS framework into a massive multiple input multiple output (MIMO)  case, since the MIMO provides extra dimensions for the received signals to provide higher detection accuracy and more user access. It uses the same outer tree code as \cite{Amalladinne2020} to stitch the information sequences in all the blocks, and a non-Bayesian inner decoder is designed based on a covariance based-maximum likelihood (CB-ML) algorithm. It shows that the affordable number of users can be larger as the number of the receiver antennas increase, hinting the massive MIMO can provide extra spectrum efficiency in URA. However, there are still some redundant parity bits inserted in the encoding process, which highly
reduces the overall spectral efficiency. For this issue, authors in \cite{Shyianov2021} proposed a framework to eliminate the need of redundant parity bits and better utilize the massive MIMO channel. The core idea of \cite{Shyianov2021} is that the block-wise information sequences can be stitched together by clustering the estimated channel vectors produced by the inner approximate message passing (AMP) decoder. This work leads to an important revelation in massive MIMO URA that no parity bits are required when the BS is equipped with a large number of antennas, since the channel vectors can be regarded as virtual signatures of user devices to connect the pieces of information sequences. However, the reducing length of each sub-block for decreasing the computational overhead of the BS, will increase the probability of user codeword collisions, i.e., more than one device chooses the same codeword. The possible codewords collisions in each block will further result in a superimposition of the estimated channel vectors, leading to the failure of the stitching process. Besides, all these CS-based divide-and-conquer algorithms map a piece of data with each sub-block to a long transmitted codeword, which is directly proportional to the number of the active users. This leads to a very low efficiency as the number of the sub-blocks are large. In addition to the divide-and-conquer strategies, there are some alternative low-complexity schemes in the context of MIMO URA \cite{Tensor, tanner}. The authors in \cite{Tensor} offered an efficient solution based on a tensor construction. In \cite{tanner}, the channel coherence interval was divided into a number of sub-slots and each active transmitter selected certain sub-slots to repeatedly transmit its codeword according to a sparse Tanner graph. At the receiver, a novel iterative algorithm was designed for decoding.

Recently, there is another line of works in \cite{PilotFengler, PilotAhmadi, FASURA} that has investigated the use of massive MIMO in the context of URA. These works have shown great efficiencies in the context of MIMO URA  with the quasi-static fading channels. Concretely, they divided the data into two phases, which are also called as the pilot-based approaches. The method in \cite{PilotFengler} is the original version of the pilot-based methods, where channel estimation is obtained in the pilot phase, and the decoding algorithm is performed in the decoding phase by considering the channel coefficients are known. The polar code associated with the list decoder has been adopted in the decoding phase. The method in \cite{PilotAhmadi} extended the work \cite{PilotFengler}, where the multiple stages of orthogonal pilots and an additional successive interference cancellation (SIC) operation are adopted to cope with the user collisions in the pilot phase and enhance the decoding performance.
The method in \cite{FASURA} combined the pilot-based method with an additional operation called noisy pilot channel estimation (NOPICE) to re-estimate the channel based on the observed signals in both the pilot and the data phases, and further re-update the estimations of the coded symbols.

Along this line, we design a novel two-phase CS-based massive MIMO URA scheme, where the device's transmitted data are divided into two phases with CS coded. No additional parity bits and stitching process are required for our scheme. The decoding of sub-blocks are established as a series of CS problems. More importantly, from the perspective of the CS, we provide a theoretical framework to analyze the induced problem based on the replica method. The analytical results show that the performance metrics of our URA scheme can be evaluated by a single-valued free entropy function. Moreover, when the massive MIMO channel coefficients of user devices are provided, each piece of data from user devices can be mapped to a short transmitted codeword. The overall length of the codewords approximately linearly increases, as the increase of the number of the information bits. For practical applications, we further design a novel AMP-based algorithm with an additional noise tuning operation.

Our contributions can be summarized as follows.
\begin{itemize}
\item
This paper proposes a two-phase URA scheme with CS coding. In the first phase, we collect a long sub-block of information bits. All the user CSI and the associated information bits are recovered within this phase by a standard CS approach. In the second phase, the residual message sequences of all the users are partitioned into sub-blocks with a very short length, which exhibits a higher spectral efficiency and a lower computational complexity than the existing CS-based transmission schemes in URA. By introducing a common codebook with mutually orthogonal codewords and taking advantage of the knowledge of estimated CSI in the first phase, the sub-block recovery problem in the second phase is cast as a support recovery problem.

\item
We then establish a performance analysis framework and performance metrics for our proposed URA scheme under various system parameters, including the number of antennas, the number of user devices, and the length of the sub-blocks, etc. This is achieved by adopting an analytical tool in the statistical physics literature, called replica method \cite{2012Probabilistic, RepMMVzhu2018, RepMMVG2018}. Concretely, we derive the replicated free entropy function over the designed system. Based on that, we establish the connection between the Bayes-optimal error probability and the system parameters. The phase transition diagram of the system can also be depicted based on the free entropy function.

\item
A decoder tailored to the induced recovery problem in the second phase is proposed based on the core idea of hybrid AMP, which achieves the performance indicated by the proposed performance analysis framework. To cope with the non-ideal CSI required in the first phase, we modify the proposed-AMP algorithm by a noise variance tuning in each iterative step. Simulations verify that our scheme outperforms the most recent counterparts.
\end{itemize}


\section{System Model}
Consider a single-cell cellular network consisting of $K_{\rm tot}$ single-antenna user devices served by a BS equipped with $M$ antennas, located at the center of a cell. We suppose there is only a fraction of number of user devices are active simultaneously, and the total number of active user devices is $K$. Let $\mathcal S$ denotes the set of user devices, and $|\mathcal S| = K$. All the user devices are assumed uncoordinated, where each single user device expects to transmit a signal, loaded with $B$ bits of information to the BS with the signal from the $i$th user device is denoted as $\mathbf b_i\in\mathbb C^{T_{\rm tot}\times 1}$, $i\in \mathcal S$. We define the massive MIMO channel from the $i$th user device to the BS as $\mathbf h_i\in\mathbb C^{M\times 1}, i\in \mathcal S$. Each user channel $\mathbf h_i$ is supposed to follow an independent Rayleigh fading with path-loss and shadowing component denoted as $\beta_i$, i.e., $\mathbf h_i\sim\mathcal N_{\mathbb C}(0,\beta_i\mathbf I_M)$. We note that this paper adopts a quasi-static channel model, where all the user channels are considered invariant within one coherence block. Since it is typical in the mMTC applications that the length of user packets and the duration for a transmission are very short, we assume all the user devices finish one transmission within one coherence block through invariant channels. The BS adopts a slotted structure for multiple access on the uplink through synchronization\footnote{We notice that the synchronization assumption is a general consideration in multiple access \cite{Amalladinne2020, Enhance2020, 2020Vamsi, Fengler2019, Shyianov2021}. For the case of the asynchronous massive connectivity, please refer to the recent work \cite{meixia2021}.}, yielding the following signal model at the BS
\begin{equation}
\tilde{\mathbf Y} = \sum\limits_{i\in \mathcal S} \mathbf b_i\mathbf h_i^T+\tilde{\boldsymbol\Xi},\label{rec_mod}
\end{equation}
where $\tilde{\boldsymbol\Xi}\in\mathbb C^{T_{\rm tot}\times M}$ is AWGN with each element distributed as $\mathcal N_{\mathbb C}(0,\sigma^2)$. Based on the received signal $\tilde{\mathbf Y}$, the decoder at the BS aims to provide an unordered list of transmitted information bit sequences. The core mission of the URA is to determine the decoder that fulfills the target fraction of error information bit sequences
with manageable complexity and the highest possible spectral efficiency.

\section{Proposed URA Scheme}
The innovation of an URA scheme lies in the map between the information bit sequences and the transmitted signals. Before proceeding, we briefly review the URA schemes.
\subsection{A Brief Review of URA Schemes}
Typically, we map each possible information bit sequence into a unique codeword, resulting in a common codebook $\tilde{\mathbf C} \in\mathbb C^{T_{\rm tot}\times 2^B}$. This yields that the system model (\ref{rec_mod}) can be expressed in the following matrix form
\begin{equation}
\tilde{\mathbf Y} = \tilde{\mathbf C}\tilde{\mathbf X}\text{diag}^{\frac{1}{2}}(\tilde{\boldsymbol\rho})\mathbf H^T+\tilde{\boldsymbol\Xi},\label{sysm1}
\end{equation}
where $\mathbf H\triangleq [\mathbf h_1,\dots, \mathbf h_K]$ is the collection of user channel vectors and $\tilde{\boldsymbol\rho}\triangleq\{\tilde{\rho}_k\}_{k = 1}^K$ is the collection of transmit powers of all the user devices. We note $\tilde{\mathbf X}$ is a $2^B\times K$ matrix with only one non-zero entry in each column, i.e., $\tilde{\mathbf x}_k\in\{0,1\}^{2^B\times 1}$, which is an indicator matrix that indicates the transmitted codewords of all the user devices. Above formulation implies a close connection between the URA and the CS problem.  However, the computational overhead of such the CS problem increases exponentially with the length of information bit sequences. Hence, the corresponding CS algorithm is computationally prohibitive even with a moderate value of $B$, preventing its application in URA.

%




The main motivation for the URA schemes \cite{Amalladinne2020, Enhance2020, 2020Vamsi, Shyianov2021} is to manage the complexity. It divides the data stream generated by each active device into $\bar J$ sub-blocks with the same length $B/\bar J$. The signals in sub-blocks are transmitted sequentially, with every sub-block becoming a new instance of URA with the received signal model consistent with (\ref{sysm1}), albeit one with a much reduced number of information bits of length $B/\bar J$. The dimension of codebook then becomes $n_c\times 2^{B/\bar J}$, collecting the codewords of length $n_c$.
After completing all these consecutive instances, the original message is recovered by splicing together the fragments of bit sequences associated with particular user devices. Such a URA scheme reduces the computing complexity at BS, since the required complexity is proportional to $J2^{B/\bar J}$ instead of $2^B$. However, the data fragment in each sub-block is mapped into a long codeword in a similar scale of the number $K$ of the total user devices to guarantee the accurate CS recovery. Therefore, the increasing number of each sub-blocks also brings a lower spectral efficiency. To handle these issues, we propose a novel two-phase URA scheme, which better uses the massive MIMO channel to enhance both the spectral and computational efficiencies compared with the existing divide-and-conquer massive MIMO URA approaches.

\subsection{Proposed Collision-supporting URA Scheme}
 For our pattern, the $B$ bits of information sequences are partitioned into two parts with a total of $J$ CS-coded sub-blocks. The former part with $L_0$ information bits are coded with the codeword of length $n$. Based on the codebook, the BS jointly recovers the information bits and the channel coefficients of user devices. In the second phase, $(J-1)$ very short sub-blocks of length $T$ loaded with a very small number of information bits of length $L$ are consecutively transmitted. The information bits in the second phase are decoded based on the CSI acquired in the first phase. Accordingly, the proposed two-phase URA scheme is shown in Fig. \ref{sys}. We elaborate on these two encoding phases below.

\begin{figure}[ht]
\centering
\includegraphics[width=3.5in]{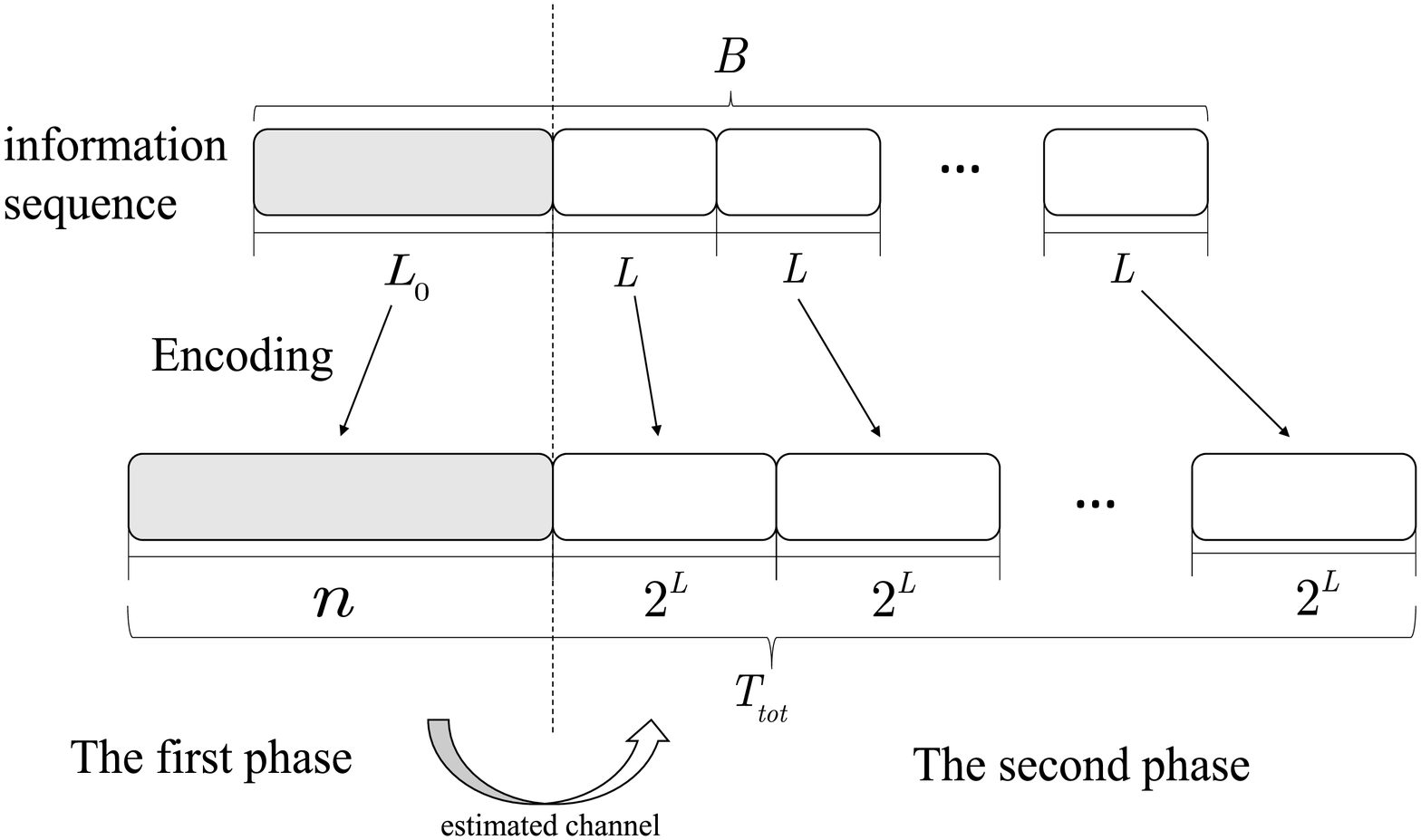}
\caption{The proposed two-phase URA divide-and-conquer strategy.}
\label{sys}
\end{figure}

The signal model for the first phase in the receiver is consistent with (\ref{rec_mod}) by replacing $T_{\rm tot}$ and $B$ with $n$ and $L_0$, respectively. We use a relatively long length of $L_0$ to avoid the codeword collisions to happen in the first phase.\footnote{In practice, the possible collisions with a very low probability can be identified by the protocol in [22], declare them as errors and retransmit them in the next frame after some feedback procedure.} The joint original message recovery and channel estimation can be formulated as a CS problem. After completing the CS instance in the first phase, the recovery signals form an unordered list of channel vectors of the $K$ active user devices. We note that this paper will not go to details of this part of work, since it is similar with the cases considered in \cite{Shyianov2021, 2022Li}, and the CS instance can be completed by directly using the algorithms in \cite{LiuMassive12018, RepMMVG2018}.

We now concentrate on the second phase of our proposed URA scheme. In each $l$th sub-block, we use a common codebook $\mathbf C\in\mathbb C^{T\times 2^L}$ and define a $2^L\times K$ index matrix $\underline{\mathbf X}_l$, where the $k$th row $\underline{\mathbf x}_{l,k}\in\{0,1\}^{2^L\times 1}$ has only one non-zero element. The received signal in each sub-block of the second phase can be formulated as
\begin{equation}
\underline{\mathbf Y}_l = \mathbf C\underline{\mathbf X}_l\text{diag}^{\frac{1}{2}}(\tilde{\boldsymbol\rho})\mathbf H^T+\underline{\boldsymbol\Xi}_l,\label{system}
\end{equation}
where $\underline{\boldsymbol\Xi}_l$ is AWGN with each element distributed as $\mathcal N_{\mathbb C}(0,\sigma^2)$. Different from the signal model (\ref{sysm1}), we consider the channel matrix $\mathbf H$ is known in the second phase after executing the joint recovery algorithm in the first phase. As we shall explain in the Sec. \ref{ipcsi}, the channel estimation error can be absorbed into the noise matrix.

Another critical difference between the first and the second phase is the length of information bits in each sub-block. We adopt a very short length of information bits for each sub-block in the second phase, so that it is general the case that $K\gg 2^L$. This leads to that the row-sparsity of $\underline{\mathbf X}_l$ disappears, and we can naturally adopt a codebook with mutually orthogonal codewords, i.e., $T = 2^L$, and $\mathbf C^H\mathbf C = \mathbf I_{2^L}$. Then, the problem of estimating the matrix $\underline{\mathbf X}_l$ with given $\mathbf C$ and $\mathbf H$ in each sub-block can be formulated as a new CS instance. Multiplying from the left by $\mathbf C^H$, neglecting the sub-block index $l$ and defining $\mathbf Y^T\triangleq\mathbf C^H\underline{\mathbf Y}$ and $\boldsymbol\Xi^T\triangleq\mathbf C^H\underline{\boldsymbol\Xi}$, we obtain from (\ref{system}) that
\begin{equation}
\mathbf Y = \mathbf S\mathbf X+\boldsymbol\Xi,\label{equi_mod}
\end{equation}
where $\mathbf Y\in\mathbb C^{M\times 2^L}$, $\mathbf S\in\mathbb C^{M\times K}$ is the known normalized channel matrix with all the elements $s_{m,k}$ i.i.d. distributed as $\mathcal N_{\mathbb C}(0,\frac{1}{M})$, $\mathbf X$ is a $K\times 2^L$ matrix with only one non-zero entry in each row, i.e., $\mathbf x_k\in\{0,\rho_k\}^{1\times 2^L}$, $\rho_k = \sqrt{M\tilde{\rho_k}\beta_k}$, and $\boldsymbol\Xi$ is AWGN with each variance $\sigma^2$.

Compared with the existing CS-based URA schemes \cite{Amalladinne2020, Enhance2020, 2020Vamsi, Shyianov2021}, our proposed two-phase URA scheme has the following advantages. First, such a consideration is greatly beneficial for the computational efficiency at the BS, since the complexity in each sub-block increases exponentially with the length of information bits. Second, the overall number of channel uses of the proposed URA scheme is $n+(B-L_0)/L\times 2^L$, which also becomes smaller with the decreasing $L$. When the value of $L$ is small, the overall length of the codewords in the second phase is in the same scale as $B$. This is a very significant advantage compared with the schemes in \cite{Amalladinne2020, Enhance2020, 2020Vamsi, Shyianov2021}. As aforementioned, they map each sub-block of data to a long transmitted codeword, the length of which is proportional to $K$, leading to the overall length of the codewords typically far larger than $B$. Besides, the channel matrix $\mathbf S$ can be regarded as signatures associated with particular user devices. This results in that no additional parity bits and stitching process are required for our pattern. In addition, we can see in the subsequent sections that the codeword collisions induced by the short length of the sub-blocks in the second can be efficiently coped with by leveraging the CSI knowledge acquired in the first phase.

The above discussions illustrate the advantages of our URA scheme in terms of complexity and spectral efficiency. Based on that, one critical mission is to evaluate the decoding performance of the pattern. According to (\ref{equi_mod}), our task in the second phase turns into estimating the matrix $\mathbf X$ based on the normalized channel matrix $\mathbf S$ and transformed received signal $\mathbf Y$. We determine the position of the non-zero element in the $k$th row $\mathbf x_k$ of $\mathbf X$ to determine the original message associated with the channel signature $\mathbf h_k$. We note that the $\mathbf h_k$ indicates the channel vector that corresponds to the $k$th column in $\mathbf S$.
Intuitively, considering the information bit sequences of all the user devices are produced uniformly, there are about $K/2^L$ user devices that transmit the same information bit sequence of length $L$ in each sub-block. Accordingly, the new CS instance induced by (\ref{equi_mod}) is a multiple measurement vectors (MMV)-CS problem with about $K/2^L$-sparsity for each column of matrix $\mathbf X$. From the perspective of the CS, the support recovery performance is impacted by the number of the BS antennas $M$, and the level of sparsity $K/2^L$. Hence, it can be inferred that the decreasing $L$ is beneficial for the computational overhead reduction as well as the spectral efficiency improvement, but could increase the probability of the codeword collisions and so that reduces the decoding performance with given number of the BS antennas $M$ and the active user devices $K$. Accordingly, in order to facilitate the practical applications of our URA scheme, we need to further evaluate how short can the length $L$ be adopted to guarantee a certain error probability threshold in a given communication system with specific system parameters. For this purpose, we propose a theoretical framework in the next section, which analytically reveals the connection between the decoding performance of our proposed URA scheme and above system parameters.




\section{Performance Analysis for Proposed URA Scheme}
In this section, we propose a theoretical framework to analyze the performance metrics of our proposed URA scheme in the Bayes-optimal setting. It provides the Bayes-optimal error probability performance of our scheme with respect to the system parameters. Meanwhile, this performance analysis framework also indicates the direction of our decoder design, which will be investigated in the following sections. As aforementioned, the channel estimation error in the first phase can be absorbed into the noise matrix $\boldsymbol\Xi$, and we thus focus on the signal model (\ref{equi_mod}) here.


To measure the performance of our URA scheme, we investigate two performance metrics, i.e., the per-user error probability, and the  mean square error (MSE), which are two general metrics in the URA literature. The per-user error probability captures the fraction of incorrect recovered information bit sequences at the BS, which can be formulated based on (\ref{equi_mod}) as
\begin{equation}
P_e = \frac{1}{K}\sum\limits_{k = 1}^K \mathbb I(\exists l = 2,\dots, J, \mathbf x_{k}\neq \bar{\mathbf x}_{k}),\label{def_Pe}
\end{equation}
where $\mathbb I(\cdot)$ is the indicator function, $l$ is the index of sub-block, $\hat{\mathbf X}\triangleq [\hat{\mathbf x}_1^T,\dots, \hat{\mathbf x}_K^T]^T$ is the estimate of $\mathbf X$ and $\bar{\mathbf X}\triangleq [\bar{\mathbf x}_1^T,\dots, \bar{\mathbf x}_K^T]^T$ with each column $\bar{\mathbf x}_k$ obtained after hard thresholding of the corresponding $\hat{\mathbf x}_k$. On the other hand, the MSE can be formulated as
\begin{align}
&\text{MSE} = \frac{1}{K}\sum\limits_{k = 1}^K||\mathbf x_k-\hat{\mathbf x}_k||^2.\label{met}
\end{align}


In the following, we propose an analytical framework to evaluate the performance metrics defined in (\ref{def_Pe}) and (\ref{met}) based on the system parameters. Our analysis in this paper is under the Bayesian framework. Accordingly, we associate a posterior probability $p(\hat{\mathbf X}|\mathbf Y)$ to the signal estimation given the received signal, which can be given based on the Bayes formula by
$
p(\hat{\mathbf X}|\mathbf Y) =
\frac{p(\hat{\mathbf X})p(\mathbf Y|\hat{\mathbf X})}{\int{\rm d}\hat{\mathbf X}p(\hat{\mathbf X})p(\mathbf Y|\hat{\mathbf X})},
$
where the likelihood based on model (\ref{equi_mod}) is given by
\begin{equation}
p(\mathbf Y|\hat{\mathbf X}) = \frac{1}{(\pi\sigma^2)^M}\exp\left(-\frac{(\mathbf Y-\mathbf S\hat{\mathbf X})(\mathbf Y-\mathbf S\hat{\mathbf X})^H}{\sigma^2}\right),\label{pos}
\end{equation}
and the prior of $\hat{\mathbf X}$ is given by $p(\hat{\mathbf X}) = \prod\nolimits_{k = 1}^Kp(\hat{\mathbf x}_k)$, where each factor is formulated as
\begin{equation}
p(\hat{\mathbf x}_k) = \frac{1}{2^L}\sum\nolimits_{i = 1}^{2^L}\delta(\hat{x}_{k,i}-\rho_k)\prod_{m\neq i}\delta(\hat{x}_{k,m}),\label{con_pri}
\end{equation}
where $\delta(\cdot)$ is a Dirac delta function and  this prior distribution enforces each vector to have only one non-zero value, indicating the codewords transmitted by user devices. The probability $1/2^L$ indicates that information bit sequences of the user devices in each sub-block are all uniformly distributed.

We next adopt the replica method \cite{RepTan2002, 1985Entropy} to evaluate the free entropy function of this Bayesian probability model, which reflects the macro performances of our system (\ref{equi_mod}), revealing the connection between the system parameters and the performance metrics. Typically, our statistical approach for analysis is performed in a certain asymptotic regime with $K\to\infty$, $M\to\infty$ and a fixed ratio $M/K = \alpha$. For the ease of concise expression, similarly with the scenario considered in \cite{AMPdecoder}, we suppose that the large-scale fading of each user device is compensated by the user transmit power, which normalizes the noise variance, i.e., we consider $\rho_k = 1$ and the power factor is absorbed into the parameters $\sigma^2$. We remark that although we have restricted ourselves to a particular power assumption, the analysis for a generic power allocation is straightforward by further considering an empirical distribution over the power sequence $\{\rho_k\}_{k = 1}^K$.

\subsection{Evaluating MSE based on the Free Entropy Function}
In order to derive the free entropy function, we begin by writing the partition function over the probability $p(\mathbf Y|\hat{\mathbf X})$ defined in (\ref{pos}) as
\begin{equation}
\mathcal Z(\mathbf Y, \mathbf S) = \sum\limits_{\hat{\mathbf X}}p(\hat{\mathbf X})\exp\left(-\frac{(\mathbf Y-\mathbf S\hat{\mathbf X})(\mathbf Y-\mathbf S\hat{\mathbf X})^H}{\sigma^2}\right),\label{part}
\end{equation}
and the free energy $\mathcal F_K(\mathbf Y, \mathbf S)$ is defined by
$
\mathcal F_K(\mathbf Y, \mathbf S) \triangleq K^{-1}\log\mathcal Z(\mathbf Y, \mathbf S)
$. We further define the limit $\mathcal F\triangleq \lim\limits_{K\to\infty} \mathcal F_K(\mathbf Y, \mathbf S)$. It is supposed that the limit exists and is equivalent to its average for almost all realizations of the user channels and the noise, i.e., we sequently have
\begin{equation}
\mathcal F = \lim\limits_{K\to\infty}K^{-1}\mathbb E_{\mathbf S,\mathbf Y}(\log\mathcal Z(\mathbf Y, \mathbf S)).\label{S-A}
\end{equation}
In order to evaluate the free entropy (\ref{S-A}), we make use of the treatment in \cite{1985Entropy}. When $n$ is small enough, we sequently have
\begin{align}
\mathcal F &= \lim\limits_{K\to\infty}\lim\limits_{n\to 0}K^{-1}\frac{\partial\log\mathbb E_{\mathbf S,\mathbf Y}(\mathcal Z(\mathbf Y, \mathbf S))^n}{\partial n}\nonumber\\
&= \lim\limits_{K\to\infty}\lim\limits_{n\to 0}\frac{\mathbb E_{\mathbf S,\mathbf Y}(\mathcal Z(\mathbf Y, \mathbf S))^n-1}{Kn}.\label{rep_tri}
\end{align}
We note that the term $\mathcal Z(\mathbf Y, \mathbf S)$ in (\ref{rep_tri}) is the partition function of the posterior distribution (\ref{pos}), and $(\mathcal Z(\mathbf Y, \mathbf S))^n$ is the so-called replicated partition function as it can be regarded as the partition function associated with $n$ replicas that are all independently drawn from (\ref{pos}). The general replica assumptions are provided in \cite{1985Entropy}, which are also used in this paper, given by
\begin{itemize}
\item
According to the replica trick \cite{1985Entropy}, the quantity $\mathbb E_{\mathbf S,\mathbf Y}(\mathcal Z(\mathbf Y, \mathbf S))^n$ in (\ref{rep_tri}) is calculated as if $n$ were an integer, and we take the fact that $n$ is a real number into consideration after obtaining a tractable enough expression.
\item
The order of the two limits $K\to\infty$ and $n\to 0$ in (\ref{rep_tri}) can be interchanged without affecting the final result of the free entropy calculation.
\item
All the replicas are statistically equivalent, and thus the overlaps (some parameters that will be defined in the following) are independent of the replica indices, which is also called replica-symmetry (R-S) assumption.
\end{itemize}
Based on that,  we can arrive at the following theorem. We note that the derivations basically follow the style of the replica analysis in \cite{1985Entropy}. We heuristically extend the R-S assumption, from the single measurement vector (SMV) case to the MMV case. The R-S assumption in the MMV case is not rigorously proved, but its validity has been demonstrated in \cite{RepMMVzhu2018, RepMMVG2018}.

\setcounter{TempEqCnt}{\value{equation}}
\setcounter{equation}{11}
\begin{figure*}[tbp]
\begin{align}
 &\bar{\Phi}(\mathbf E) = {\rm Tr}\left(-\left(\boldsymbol\Delta+\frac{1}{\alpha}\mathbf E\right)^{-1}\left(\alpha\boldsymbol\Delta+\mathbf C\right)\right)-\alpha\log\left|\boldsymbol\Delta+\frac{1}{\alpha}\mathbf E\right|+\sum\limits_{\mathbf x}p(\mathbf x)\int D\mathbf z\log\left(\sum\limits_{\hat{\mathbf x}}p(\hat{\mathbf x})\right.\nonumber\\
 &\times\left.\exp\left({\rm Tr}\left(2\left((\boldsymbol\Delta+\frac{1}{\alpha}\mathbf E)^{-1}\mathbf x^T+(\boldsymbol\Delta+\frac{1}{\alpha}\mathbf E)^{-\frac{1}{2}} \mathfrak R(\mathbf z^T)\right)\hat{\mathbf x}-(\boldsymbol\Delta+\frac{1}{\alpha}\mathbf E)^{-1}\hat{\mathbf x}^T\hat{\mathbf x}\right)\right)\right),\label{fre_ent_E}
\end{align}
\setcounter{equation}{14}
\begin{align}
\Phi(d) = &-\frac{d+2^L\alpha\sigma^2+1}{\sigma^2+\alpha^{-1}d}-(2^L-1)\alpha\log(\sigma^2+\alpha^{-1}d)+\int D\mathbf z\log\Bigg(\sum\limits_{i = 2}^{2^L}\nonumber\\
&\exp\bigg(-\frac{1}{\sigma^2+\alpha^{-1}d}+\frac{2z_i}{(\sigma^2+\alpha^{-1}d)^{\frac{1}{2}}}\bigg)
+\exp\bigg(\frac{1}{\sigma^2+\alpha^{-1}d}+\frac{2z_1}{(\sigma^2+\alpha^{-1}d)^{\frac{1}{2}}}\bigg)\Bigg).\label{phi2}
\end{align}
\hrulefill
\end{figure*}
\setcounter{equation}{\value{TempEqCnt}}

\begin{theorem}
Consider a Bayes-optimal setting, where the experimental distribution $p(\boldsymbol X)$ over all the rows in matrix $\mathbf X$ matches the considered distribution (\ref{con_pri}) with the second-order moment denoted as $\mathbf C\triangleq\mathbb E_{\boldsymbol X}(\boldsymbol X^H\boldsymbol X)$. Let all the replica assumptions hold, by adopting the saddle point method \cite{2012Probabilistic}, the free entropy in (\ref{rep_tri}) can be calculated by a function with respect to a $2^L\times 2^L$ matrix $\mathbf E$, given by (\ref{fre_ent_E}) at the top of this page, where $\boldsymbol\Delta\triangleq \sigma^2\mathbf I$, the term $D\mathbf z$ is a Gaussian integration measure, and the notation $\mathfrak R(\cdot)$ returns the real part of the input. The function in (\ref{fre_ent_E}) is then called free entropy function.
\end{theorem}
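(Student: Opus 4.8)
The plan is to evaluate the quenched free entropy (\ref{rep_tri}) by the replica method and to show that its saddle-point value equals (\ref{fre_ent_E}). I would first expand the replicated partition function $\mathbb E_{\mathbf S,\mathbf Y}(\mathcal Z(\mathbf Y,\mathbf S))^n$ for integer $n$. Writing the observation as $\mathbf Y=\mathbf S\mathbf X^0+\boldsymbol\Xi$ with a ground-truth signal $\mathbf X^0$ drawn from the prior and replacing $\mathbb E_{\mathbf Y}$ by $\mathbb E_{\mathbf X^0,\boldsymbol\Xi}$, the truth enters on the same footing as a zeroth replica, giving
\[
\mathbb E_{\mathbf S,\mathbf Y}(\mathcal Z)^n=\mathbb E_{\mathbf S,\mathbf X^0,\boldsymbol\Xi}\Bigl[\prod_{a=1}^n\sum_{\hat{\mathbf X}^a}p(\hat{\mathbf X}^a)\exp\bigl(-\sigma^{-2}\|\mathbf Y-\mathbf S\hat{\mathbf X}^a\|^2\bigr)\Bigr].
\]
Because the $M$ rows of $\mathbf S$ are i.i.d. $\mathcal N_{\mathbb C}(0,M^{-1}\mathbf I)$, they factorize, and for each row the $(n+1)$ row-vectors $\{\mathbf s_m^T\hat{\mathbf X}^a\}_{a=0}^n$ are jointly complex Gaussian whose cross-covariances are fixed by the overlap matrices $\mathbf Q^{ab}\triangleq K^{-1}(\hat{\mathbf X}^a)^H\hat{\mathbf X}^b\in\mathbb C^{2^L\times 2^L}$. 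Performing the Gaussian averages over $\mathbf S$ and $\boldsymbol\Xi$ then leaves an expression that depends on the replicas only through $\{\mathbf Q^{ab}\}$.

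Next I would promote the overlaps to free order parameters, inserting $1=\int\prod_{a\le b}\mathrm d\mathbf Q^{ab}\,\delta\bigl(K\mathbf Q^{ab}-(\hat{\mathbf X}^a)^H\hat{\mathbf X}^b\bigr)$ and representing each Dirac delta by a Fourier integral with a conjugate matrix $\hat{\mathbf Q}^{ab}$. This factorizes the replicated partition function into an \emph{energetic} contribution from the channel/noise average, which is extensive of order $M=\alpha K$ and carries the dependence on the noise covariance $\boldsymbol\Delta=\sigma^2\mathbf I$ and the ground-truth second moment $\mathbf C$, and an \emph{entropic} contribution that decouples over the $K$ rows into identical single-row integrals against the prior (\ref{con_pri}). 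Since both exponents are extensive in $K$, the $K\to\infty$ limit is governed by the saddle point over $\{\mathbf Q^{ab},\hat{\mathbf Q}^{ab}\}$, as asserted in the theorem.

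The final stage is the replica-symmetric reduction and the $n\to 0$ limit. Under the R-S ansatz all diagonal overlaps share one value and all off-diagonal overlaps another; in the Bayes-optimal (matched) setting the Nishimori identities equate the truth--replica overlap with the replica--replica overlap and collapse the remaining freedom to a single $2^L\times 2^L$ matrix, which I would identify with $\mathbf E$ (together with its conjugate), so that $\boldsymbol\Delta+\alpha^{-1}\mathbf E$ plays the role of the effective noise covariance of the decoupled channel. Expanding the energetic part to first order in $n$ yields the $\mathrm{Tr}(-(\boldsymbol\Delta+\alpha^{-1}\mathbf E)^{-1}(\alpha\boldsymbol\Delta+\mathbf C))$ and $-\alpha\log|\boldsymbol\Delta+\alpha^{-1}\mathbf E|$ terms, while a Hubbard--Stratonovich transformation that linearizes the residual inter-replica coupling in the entropic part introduces the auxiliary Gaussian field and produces the single-row integral $\sum_{\mathbf x}p(\mathbf x)\int D\mathbf z\log(\sum_{\hat{\mathbf x}}p(\hat{\mathbf x})\exp(\cdots))$, with $\mathfrak R(\cdot)$ arising from keeping the real part of the complex inner products. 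Extracting the $O(n)$ coefficient through $\partial_n(\cdot)|_{n=0}$ then assembles (\ref{fre_ent_E}).

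I expect the main obstacle to be this last reduction in the MMV (matrix) case: correctly parametrizing the symmetric block structure of the matrix-valued overlaps, checking that the Bayes-optimal Nishimori conditions leave exactly one free matrix $\mathbf E$, and interchanging the $K\to\infty$ and $n\to 0$ limits so that the decoupled single-row free entropy emerges in closed form. As the paper itself notes, the R-S assumption is adopted heuristically and is not rigorously proved in the MMV setting, so justifying this step---rather than the Gaussian averaging or the saddle-point bookkeeping---is the crux of the argument.
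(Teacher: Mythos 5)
Your proposal follows essentially the same route as the paper's proof: Gaussian averaging over $\mathbf S$ and $\boldsymbol\Xi$ row by row, introduction of overlap order parameters with Fourier-conjugate matrices via Dirac deltas, the replica-symmetric ansatz, a Hubbard--Stratonovich transformation to decouple the inter-replica coupling, saddle-point evaluation in the $K\to\infty$ limit, and finally the Bayes-optimal (Nishimori) conditions $\mathbf T^{\star}=\mathbf M^{\star}$, $\mathbf Q^{\star}=\mathbf C$ that collapse the order parameters to the single matrix $\mathbf E=\mathbf C-\mathbf M$. The only difference is notational --- you fold the ground truth in as a zeroth replica with one family of overlaps $\mathbf Q^{ab}$, where the paper names the truth--replica and replica--replica overlaps separately as $\mathbf C$, $\mathbf M^{(a)}$, $\mathbf Q^{(a)}$, $\mathbf T^{(ab)}$ --- which is an equivalent bookkeeping.
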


\begin{proof}
Please see the Appendix \ref{ap1}.
\end{proof}
\setcounter{equation}{12}
We define a matrix
\begin{equation}
\mathbf E^{\star}\triangleq \frac{1}{K}\sum\limits_{k = 1}^K(\mathbf x_k-\hat{\mathbf x}_k)^H(\mathbf x_k-\hat{\mathbf x}_k),\label{E}
\end{equation}
where each $\hat{\mathbf x}_k$ is the sample of the posterior distribution $p(\hat{\mathbf X}|\mathbf Y)$ in (\ref{pos}). Hence, in the large $K$ limit, the sum of the diagonal elements of $\mathbf E^{\star}$ corresponds to the minimum MSE (MMSE) of our system. According to the basic theory in \cite{1985Entropy}, the matrix $\mathbf E^{\star}$ is exactly the global optimal point of (\ref{fre_ent_E}). We can thus evaluate the MMSE performance with given system parameters by finding the global optimal point of function (\ref{fre_ent_E}). However, it is typically not easy. To simplify the expression (\ref{fre_ent_E}), we combine the heuristics of the R-S assumption, and notice that the matrixes $\mathbf C$ and $\mathbf E$ appeal some symmetrical structures. Heuristically, we assume the following parametric forms with parameters $a$, $b$ and $c$.
\begin{align}
\mathbf C = c\mathbf I_{2^L},\quad \mathbf E = (a-b)\mathbf I_{2^L}+b\boldsymbol\amalg_{2^L},\label{symm}
\end{align}
where $\boldsymbol\amalg_{2^L}$ stands for the $2^L\times 2^L$ matrix with elements all equal to one. Such structures come from the symmetry of the distribution (\ref{con_pri}), where the circumstance of a particular element is independent of its index. Accordingly, we have the following proposition that provides a simpler expression of (\ref{fre_ent_E}).
\begin{proposition}
Considering (\ref{symm}) and using $d\triangleq a-b$, the free entropy function (\ref{fre_ent_E}) can be reformulated with respect to the single scalar argument $d$, which is given by (\ref{phi2}) at the top of this page. The matrix $\mathbf E^{\star}$ that reflects the MMSE over system defined in (\ref{equi_mod}) can be evaluated by $\mathbf E^{\star} = d^{\star}\mathbf I_{2^L}-1/2^Ld^{\star}\boldsymbol\amalg_{2^L}$, where $d^{\star}$ is the global optimal point of (\ref{phi2}).
\end{proposition}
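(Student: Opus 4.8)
The plan is to substitute the permutation-symmetric ansatz (\ref{symm}) into the matrix-valued free entropy (\ref{fre_ent_E}) and exploit the fact that every matrix appearing there is of the form $\lambda\mathbf I_{2^L}+\mu\boldsymbol\amalg_{2^L}$, which is diagonalized once and for all by the split into the all-ones eigenvector and its orthogonal complement. First I would record the spectral data: writing $\mathbf M\triangleq\boldsymbol\Delta+\alpha^{-1}\mathbf E$ with $\boldsymbol\Delta=\sigma^2\mathbf I$ and $\mathbf E=d\mathbf I_{2^L}+b\boldsymbol\amalg_{2^L}$, the matrix $\mathbf M$ has eigenvalue $\sigma^2+\alpha^{-1}(d+2^Lb)$ along the all-ones direction and $\sigma^2+\alpha^{-1}d$ with multiplicity $2^L-1$; its inverse, determinant and square root are then explicit rational and logarithmic functions of these two eigenvalues. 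I would also fix the prior moment $\mathbf C=c\mathbf I_{2^L}$ with $c=2^{-L}$, which follows from (\ref{con_pri}) because each row is one-hot with the normalized amplitude $\rho_k=1$ placed uniformly among the $2^L$ coordinates.

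The key structural step, which simultaneously yields the claimed form of $\mathbf E^\star$ and collapses the two parameters $(d,b)$ to one, is a zero-row-sum identity for the physical overlap (\ref{E}). Since both the ground-truth row $\mathbf x_k$ and the posterior sample $\hat{\mathbf x}_k$ are one-hot with the same amplitude, $\sum_i(x_{k,i}-\hat x_{k,i})=0$ for every $k$, so every row and column of $\mathbf E^\star$ sums to zero. Imposing this on $d\mathbf I_{2^L}+b\boldsymbol\amalg_{2^L}$ forces $d+2^Lb=0$, i.e. $b=-d/2^L$, which is exactly $\mathbf E^\star=d^\star\mathbf I_{2^L}-2^{-L}d^\star\boldsymbol\amalg_{2^L}$. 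The payoff is that with $b=-d/2^L$ the all-ones eigenvalue of $\mathbf M$ becomes exactly $\sigma^2$, so $\mathbf M$ carries no signal along that direction and its only nontrivial eigenvalue is $\sigma^2+\alpha^{-1}d$ with multiplicity $2^L-1$; everything downstream is then a function of the single scalar $d$.

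With this in hand I would evaluate the three pieces of (\ref{fre_ent_E}) separately. The trace and log-determinant terms are immediate from the eigenvalues and reproduce the rational prefactor together with the $-(2^L-1)\alpha\log(\sigma^2+\alpha^{-1}d)$ term of (\ref{phi2}) (the $\log\sigma^2$ contribution is $d$-independent and may be dropped). For the entropic term I would use that a one-hot $\hat{\mathbf x}$ sitting at coordinate $j$ reduces the trace in the exponent to $(\mathbf M^{-1})_{jj}$, the cross term to $(\mathbf M^{-1})_{j,i_0}$ with $i_0$ the support of $\mathbf x$, and the field term to the $j$-th component of $\mathbf M^{-1/2}\mathfrak R(\mathbf z^T)$; permutation symmetry of (\ref{con_pri}) lets me fix $i_0=1$ in the outer average. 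Writing $\mathbf M^{-1/2}=u^{-1/2}\mathbf I+(\sigma^{-1}-u^{-1/2})2^{-L}\boldsymbol\amalg_{2^L}$ with $u\triangleq\sigma^2+\alpha^{-1}d$, the $\boldsymbol\amalg$-part of the field is identical for every $j$, so it factors out of $\log\sum_j(\cdot)$ and integrates to zero against the zero-mean measure $D\mathbf z$; what survives is the independent field $2z_j/u^{1/2}$, and the self/cross entries of $\mathbf M^{-1}$ produce the weight $+1/u$ at $j=1$ against $-1/u$ at $j\ge2$, which is precisely the log-sum-exp in (\ref{phi2}).

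The step I expect to be the main obstacle is showing that the off-diagonal ($\boldsymbol\amalg$-direction) corrections genuinely disappear rather than being merely neglected. The $\boldsymbol\amalg$-part of $\mathbf M^{-1}$ contributes a $d$-dependent piece to the trace term and, after being pulled out of the logarithm, another $d$-dependent piece to the entropic term; using $d/\alpha=u-\sigma^2$ one checks that these two contributions are equal and opposite and cancel exactly, the cancellation being possible only because $b=-d/2^L$ pins the all-ones eigenvalue at $\sigma^2$. A secondary but necessary point is the justification that restricting $\bar\Phi$ to the symmetric, zero-sum family locates the true extremizer --- this rests on the permutation invariance of the prior together with the one-hot constraint --- and the bookkeeping of the complex Gaussian convention for $D\mathbf z$ so that the whitened field carries the stated $u^{-1/2}$ scaling. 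Once the cancellation is established, reading off $d^\star$ as the stationary point of (\ref{phi2}) and substituting back gives $\mathbf E^\star$.
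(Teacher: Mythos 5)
Your proposal is correct and reaches the same endpoint as the paper, but it justifies the pivotal reduction step differently. The paper's Appendix B proceeds purely variationally: it substitutes the ansatz $\mathbf E=(a-b)\mathbf I_{2^L}+b\boldsymbol\amalg_{2^L}$ into (\ref{fre_ent_E}) via the Woodbury identities (\ref{woodburry1})--(\ref{woodburry2}), obtains the intermediate two-parameter potential $\Phi_1(a-b,b)$ in (\ref{phi1}), sets $\partial\Phi_1/\partial b=0$, and uses $c=2^{-L}$ to solve $b^{\star}=-(a-b)/2^{L}$ before substituting back to get (\ref{phi2}). You instead pin $b=-d/2^{L}$ \emph{a priori} by a structural argument: since both $\mathbf x_k$ and any posterior sample $\hat{\mathbf x}_k$ are one-hot with equal amplitude, every row of the physical overlap (\ref{E}) sums to zero, which forces $d+2^{L}b=0$ inside the symmetric family; you then evaluate the free entropy directly on this one-parameter manifold by spectral decomposition (all-ones direction versus its complement, equivalent to the paper's Woodbury step). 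Your claimed cancellation is genuine: the $\boldsymbol\amalg$-direction correction to the trace term equals $-2^{-L}\alpha^{-1}d/(u\sigma^{2})$ with $u=\sigma^{2}+\alpha^{-1}d$, and the common deterministic factor pulled out of the log-sum-exp contributes exactly $+2^{-L}\alpha^{-1}d/(u\sigma^{2})$, while the common part of the Gaussian field is linear in $\mathbf z$ and averages to zero; this reproduces (\ref{phi2}) up to $d$-independent constants, exactly as the paper's substitution of (\ref{b_star}) into (\ref{phi1}) does. What each route buys: yours is shorter and more physical (it explains \emph{why} $b^{\star}=-d/2^{L}$, namely conservation of the one-hot sum, and avoids deriving (\ref{phi1}) altogether), whereas the paper's route certifies within the variational formulation that every point of the manifold $b=-d/2^{L}$ is stationary in the transverse $b$-direction, so that stationary points of the restricted function $\Phi(d)$ are automatically full stationary points of $\bar\Phi$. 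Your argument only shows the physical saddle \emph{lies on} the manifold, so strictly speaking a restricted extremum could a priori fail to be a full stationary point --- you flag this gap yourself; closing it requires exactly the paper's check that $\partial\Phi_1/\partial b$ vanishes on the manifold. Given that the symmetric ansatz (\ref{symm}) is itself heuristic in the paper, your proof sits at a comparable level of rigor and is an acceptable, arguably more illuminating, alternative.
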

\begin{proof}
Please see the Appendix \ref{ap2}.
\end{proof}
Hence, we can evaluate the MMSE performance with given system parameters by finding the global optimal point of a single-valued function (\ref{phi2}). After that, we then investigate the connection between the per-user probability $P_e$ and the system parameters in the following section.
\setcounter{equation}{15}

\subsection{Evaluating Per-user Error Probability}
In this section, we further reveal the connection between the per-user probability $P_e$ and the system parameters by leveraging the obtained free entropy function. Before proceeding, we provide some definitions for the ease of expressions. First, we define the following signal model
\begin{equation}
\mathbf r = \mathbf x+\mathbf z\boldsymbol\Sigma^{\frac{1}{2}},\label{rmod}
\end{equation}
where $\mathbf x$ is a random variable with prior distribution (\ref{con_pri}), $\mathbf z$ is a standard normal random variable, and $\boldsymbol\Sigma\triangleq \boldsymbol\Delta+\alpha^{-1}\mathbf E^{\star}$. Then, we define a function $\boldsymbol\eta(\mathbf r, \boldsymbol\Sigma)$, which returns
\begin{equation}
\boldsymbol\eta(\mathbf r, \boldsymbol\Sigma) =\frac{\sum\nolimits_{\hat{\mathbf x}}p(\hat{\mathbf x})\mathcal N_{\mathcal C}(\hat{\mathbf x};\mathbf r, \boldsymbol\Sigma)\hat{\mathbf x}}{\sum\nolimits_{\hat{\mathbf x}}p(\hat{\mathbf x})\mathcal N_{\mathcal C}(\hat{\mathbf x};\mathbf r, \boldsymbol\Sigma)},\label{eta}
\end{equation}
with $p(\hat{\mathbf x})$ consistent with (\ref{con_pri}). We denote $\eta_{i|j}(\mathbf r, \boldsymbol\Sigma)$ as the $i$th output of $\boldsymbol\eta(\mathbf r, \boldsymbol\Sigma)$ when the index of the true position of the non-zero component is $j$. As a consequence, we have the following proposition.
\begin{proposition}
With the definition (\ref{rmod}), the Bayes-optimal error probability $P_e$ with maximum a posteriori (MAP)-based hard threshold can be formulated by
\begin{align}
P_e = &1-\left(1-\int D\mathbf z\mathbb I\left(\exists i\in\{2,\dots, 2^L\}:\right.\right.\nonumber\\
&\quad\quad\quad\quad\quad\quad \left.\left.\eta_{i|1}(\mathbf r, \boldsymbol\Sigma)>\eta_{1|1}(\mathbf r, \boldsymbol\Sigma)\right)\right)^{J-1},\label{exp_pe}
\end{align}
where $(J-1)$ is the number of sub-blocks in the second phase of our proposed URA scheme.
\end{proposition}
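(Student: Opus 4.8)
The plan is to obtain (\ref{exp_pe}) from the decoupling principle that underlies the replica-symmetric free entropy established in Theorem~1 and Proposition~1, and then to aggregate the resulting single-letter error probability over the sub-blocks and the users. First I would invoke the decoupling principle: the replica-symmetric saddle point that produces $\mathbf E^{\star}$ certifies that, in the joint limit $K,M\to\infty$ with $M/K=\alpha$ fixed, the posterior over the rows of $\mathbf X$ factorizes and the marginal seen by any tagged user is statistically equivalent to that induced by the scalar vector-AWGN channel (\ref{rmod}), $\mathbf r=\mathbf x+\mathbf z\boldsymbol\Sigma^{\frac{1}{2}}$, with effective noise covariance $\boldsymbol\Sigma=\boldsymbol\Delta+\alpha^{-1}\mathbf E^{\star}$. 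Under this channel the Bayes-optimal posterior-mean denoiser is precisely the map $\boldsymbol\eta(\mathbf r,\boldsymbol\Sigma)$ of (\ref{eta}), whose $i$-th component is the posterior probability that the single spike of $\mathbf x$ occupies position $i$.

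Second, I would identify the hard-threshold decision with the MAP rule and translate the single-sub-block error event into the indicator appearing in (\ref{exp_pe}). Because the prior (\ref{con_pri}) is invariant under permutations of the $2^L$ positions, and $\boldsymbol\Sigma$ inherits from (\ref{symm}) the symmetric structure of a multiple of $\mathbf I_{2^L}$ plus a multiple of $\boldsymbol\amalg_{2^L}$, the posterior weight of position $i$ is a strictly increasing function of the component $\eta_i$; hence selecting the position of the largest component of $\boldsymbol\eta$ coincides with the MAP estimate, and a decoding error occurs exactly when some spurious position out-scores the true one. Using the same permutation symmetry I would condition on the true position being $j=1$ without loss of generality, so that the per-sub-block error probability equals $\int D\mathbf z\,\mathbb I\!\left(\exists i\in\{2,\dots,2^L\}:\eta_{i|1}(\mathbf r,\boldsymbol\Sigma)>\eta_{1|1}(\mathbf r,\boldsymbol\Sigma)\right)$, with $\mathbf r$ generated from (\ref{rmod}) at the $\mathbf x$ whose spike lies in position $1$.

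Third, I would aggregate. In the decoupled picture the effective noise $\mathbf z$ is drawn afresh in each of the $J-1$ sub-blocks of the second phase, since the support patterns of the interfering users and the per-sub-block AWGN are independent across $l$ while the shared physical channel $\mathbf S$ has already been absorbed into $\boldsymbol\Sigma$. A tagged user is therefore decoded correctly if and only if it is correct in all $J-1$ sub-blocks, an event of probability $(1-P_{\mathrm{block}})^{J-1}$, where $P_{\mathrm{block}}$ denotes the single-sub-block error probability obtained above. By self-averaging, i.e.\ a law of large numbers over the $K$ users, the empirical per-user error fraction in (\ref{def_Pe}) concentrates on its expectation, which gives $P_e=1-(1-P_{\mathrm{block}})^{J-1}$, exactly the claim (\ref{exp_pe}).

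The hard part will be the first step, namely rigorously justifying that the tagged-user marginal is the scalar channel (\ref{rmod}) with this specific $\boldsymbol\Sigma$; this is the non-rigorous core of the replica method and rests on the R-S assumption already adopted in the paper, so I would lean on the free entropy function of Theorem~1, whose stationarity conditions pin the overlap parameters to $\mathbf E^{\star}$ and thereby fix the effective SNR. A secondary subtlety is legitimizing the independence of the $J-1$ sub-block error events despite the physically shared channel $\mathbf S$: in the decoupled channel $\mathbf S$ enters only through $\boldsymbol\Sigma$, and the residual randomness driving each $\mathbf z$ is genuinely independent across sub-blocks, which is what yields the product form $(1-P_{\mathrm{block}})^{J-1}$.
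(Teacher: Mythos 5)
Your proposal is correct and follows essentially the same route as the paper: both rest on the replica-symmetric stationarity conditions pinning $\mathbf E^{\star}$ as the MMSE error matrix, the resulting statistical equivalence (decoupling) between (\ref{equi_mod}) and the scalar channel (\ref{rmod}) with $\boldsymbol\Sigma=\boldsymbol\Delta+\alpha^{-1}\mathbf E^{\star}$, the MAP (largest posterior component) decision rule, and independence of the error events across the $J-1$ sub-blocks. If anything, you spell out the permutation-symmetry conditioning and the sub-block-independence/self-averaging steps that the paper's appendix states only tersely.
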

\begin{proof}
Please see the Appendix \ref{ap3}.
\end{proof}

The equation (\ref{exp_pe}) establishes the relationship between the per-user error probability $P_e$ and the free entropy function (\ref{phi2}). By combining the (\ref{exp_pe}) with (\ref{phi2}), we establish an analytical tool to measure the Bayes-optimal $P_e$ of our proposed URA scheme. The performance of our scheme under specifical system parameters can be then evaluated.


 Another issue is to achieve the Bayes-optimal performance revealed by (\ref{phi2}). This requires to obtain the sample of posterior distribution (\ref{pos}). Since obtaining its explicit form is always not tractable in practical applications, it typically turns to obtain an approximation to approach the Bayes-optimal performance. The AMP framework is typically considered for this purpose. It is indicated in \cite{2012Probabilistic} that the recovery performance of the AMP decoder is reflected on the largest fixed point of the free entropy function. Accordingly, our theoretical results can also be used to evaluate the performance of the AMP-based decoder by replacing the global maximum point of the free entropy function with the largest fixed point. With the changes of the system parameters, the number of fixed points of the free entropy function (\ref{phi2}) can be changed, which is called phase transition. Such a phenomenon reveals the essential reasons of the decoding performance. For example, when the largest fixed point disappears and there becomes only one fixed point which maximizes the free entropy function, the performance of AMP decoder suddenly changes into the Bayes-optimal performance. The phase transition diagram of our scheme is provided based on (\ref{phi2}), as we will seen in the Sec. \ref{num_res}.

 \section{AMP Decoder For the Proposed URA Scheme}\label{pro_amp}
 In this section, we design the concrete AMP decoder tailored to our model (\ref{equi_mod}) in the second phase of our proposed URA scheme, which aims to decode the information bit sequences based on the CSI knowledge in the first phase.

\begin{figure}[ht]
\centering
\includegraphics[width=4.2in]{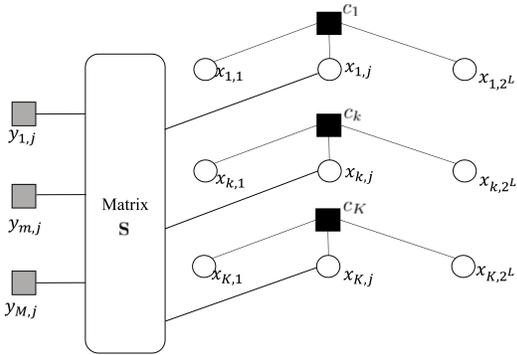}
\caption{Factor graph representation associated with (\ref{equi_mod}).}
\label{fac}
\end{figure}

\begin{algorithm*}
\centering
\caption{Proposed AMP-based algorithm}
\label{alg1}
\begin{algorithmic}[1]
\State \{Initialization\}
\State $t = 1$
\State $\forall k,j: \epsilon_{k,j}(t-1) = 1/2^L$
\State $\forall k,j: \hat x_{k,m}(t-1) = 0$
\State $\forall k,j: Q_{k,j}^x(t-1) = \epsilon_{k,j}(t-1)\rho_k$
\State $\forall l,m: \hat{s}_{m,j}(t-1) = 0$
\State Some initialization of $\sigma^2(t)$
\Repeat
\State $\forall m,j: Q_{m,j}^p(t) = \sum\nolimits_{k = 1}^K|s_{m,k}|^2Q_{k,j}^x(t-1)$
\State $\forall m,j: \hat{p}_{m,j}(t) = \sum\nolimits_{k = 1}^Ks_{m,k}\hat x_{k,j}(t-1)-Q_{m,j}^p(t)\hat{s}_{m,j}(t-1)$
\State $\forall m,j: Q_{m,j}^z(t) = Q_{m,j}^p(t)\sigma^2(t)/(Q_{m,j}^p(t)+\sigma^2(t))$
\State $\forall m,j: \hat z_{m,j}(t) = (y_{m,j}Q_{m,j}^p(t)+\hat{p}_{m,j}(t)\sigma^2(t))/(Q_{m,j}^p(t)+\sigma^2(t))$
\State $\forall m,j: Q_{m,j}^s(t) = Q_{m,j}^p(t)^{-1}\left(1-Q_{m,j}^p(t)^{-1}Q_{m,j}^z(t)\right)$
\State $\forall m,j: \hat s_{m,j}(t) = Q_{m,j}^p(t)^{-1}\left(\hat z_{m,j}(t)-\hat p_{m,j}(t)\right)$
\State $\forall k,j: Q_{k,j}^r(t) = \left(\sum\nolimits_{k = 1}^K|s_{m,j}|^2Q_{m,j}^s(t)\right)^{-1}$
\State $\forall k,j: \hat r_{k,j}(t) = \hat x_{k,j}(t-1)+Q_{k,j}^r(t)\sum\nolimits_{m = 1}^Ms^*_{m,k}\hat s_{m,j}(t)$
\State $\forall k,j: \hat x_{k,j}(t) = \mathbb E_{X_{k,j}|\boldsymbol Y_j}(x_{k,j}|\mathbf y_j; \hat r_{k,j}(t), Q_{k,j}^r(t))$
\State $\forall k,j: Q_{k,j}^x(t) = {\rm Var}_{x_{k,j}|\boldsymbol Y_j}(x_{k,j}|\mathbf y_j; \hat r_{k,j}(t), Q_{k,j}^r(t))$
\State $\forall k,j: \text{LLR}_{k,j}(t) = -\log\sum\limits_{i\neq j}\exp\left(\frac{|\hat r_{k,i}(t)|^2}{Q_{k,i}^r(t)}-\frac{|\hat r_{k,i}(t)-\rho_k|^2}{Q_{k,i}^r(t)}\right)$
\State $\forall k,j: \epsilon_{k,j}(t) =  (1+\exp(-\text{LLR}_{k,j}(t)))^{-1}$
\State $\sigma^2(t+1) = \frac{1}{2^LM}\sum\nolimits_{j = 1}^{2^L}\sum\nolimits_{m = 1}^M\left(|y_{m,j}-\hat z_{m,j}(t)|^2+Q_{m,j}^z(t)\right)$
\State $t = t+1$
\Until{$\sum\nolimits_{k = 1}^K\sum\nolimits_{m = 1}^M|\hat x_{k,m}(t+1)-\hat x_{k,m}(t)|^2>\epsilon\sum\nolimits_{k = 1}^K\sum\nolimits_{m = 1}^M|\hat x_{k,m}(t)|^2$ or $t>T_{\rm max}$}
\end{algorithmic}
\end{algorithm*}
\subsection{The AMP-based Algorithm}\label{dal}
Based on a dense graph corresponding to our system model (\ref{equi_mod}), which is specified in Fig. \ref{fac}, an AMP-based algorithm is designed to efficiently solve the problem of estimating $\mathbf X$. We denote the factor node associated with the elements of $k$th user device as $c_k(\mathbf x_k) \triangleq \frac{1}{2^L}\sum\nolimits_{j = 1}^{2^L}\delta({x}_{k,j}-\rho_k)\prod_{m\neq j}\delta({x}_{k,m})$, and we use the notation $\nu_{x_{k,j}\leftarrow c_k}$ and $\nu_{x_{k,j}\rightarrow c_k}$ to denote the message from $c_k$ to $x_{k,j}$ and from $x_{k,j}$ to $c_k$, respectively. Both $\nu_{x_{k,j}\leftarrow c_k}$ and $\nu_{x_{k,j}\rightarrow c_k}$ can be viewed as functions of $x_{k,j}$.

We note that the structure of the factor graph in Fig. \ref{fac} blocks the direct application of the AMP algorithm \cite{NOMA} designed for multi-user detection in non-orthogonal multiple access (NOMA), although our scenario in the second phase is similar with that in NOMA. We explain this from the perspective of the message passing principle. Because of the CS coding scheme, the message $\nu_{x_{k,j}\leftarrow c_k}(x_{k,j})$ depends not only on the node $c_k$, but also the message $\nu_{x_{k,j'}\rightarrow c_k}(x_{k,j'})$ from other factor nodes $x_{k,j'}, \forall j'\neq j$, resulting in that the message $\nu_{x_{k,j}\leftarrow c_k}(x_{k,j})$ will not remain unchanged during the AMP update procedure. Accordingly, we require to simultaneously update the message $\nu_{x_{k,j}\leftarrow c_k}(x_{k,j})$ in each iteration of AMP algorithm.

For this problem, we design a novel algorithm based on the core idea of the Hybrid generalized AMP (GAMP) framework \cite{HybridGAMP}. Note that although there are no rigorous proofs for the state evolution of this framework, it has shown great power in dealing with the inherent structure of a probabilistic model.
Hybrid GAMP framework partitions the edges of the graphical model into the weak subsets and strong subsets. This leads to that the messages propagating on the weak edges can be approximated by AMP-like formulas, while that on the strong edges can be calculated based on the standard loopy belief propagation (BP). The designed algorithm is carefully tailored to our system model specified by Fig. \ref{fac}, where the messages propagating from the observed nodes to $\mathbf x_k$ are considered in the weak subsets, and the messages between $c_k$ and $\mathbf x_k$ are considered in the strong subsets. We note that the $y_{m,j}$ is the $m$th element of the $j$th column of the matrix $\mathbf Y$.

For the ease of concise expressions, we work with the log-likelihood ratios (LLRs) for Bernoulli variables. Specifically, we define the LLR of each Bernoulli variable $x_{k,j}$ as
\begin{equation}
\text{LLR}_{k,j} = \log\frac{\nu_{x_{k,j}\leftarrow c_k}(x_{k,j} = \rho_k)}{\nu_{x_{k,j}\leftarrow c_k}(x_{k,j} = 0)},\label{llr}
\end{equation}
where the involved messages can be specified by the general message passing principle, given by
\begin{equation}
\nu_{x_{k,j}\leftarrow c_k}(x_{k,j}) = \sum\nolimits_{j'\neq j}c_k(\mathbf x_k)\prod\nolimits_{j'\neq j}\nu_{x_{k,j'}\rightarrow c_k}(x_{k,j'}).\label{ctox}
\end{equation}
The message $\nu_{x_{k,j'}\rightarrow c_k}$ propagating from the weak edge can be obtained by the basic GAMP update equations derived via quadratic approximation. Accordingly, we have \begin{equation}\nu_{x_{k,j'}\rightarrow c_k}(x_{k,j'}) = \mathcal{N}_{\mathcal C}(\hat r_{k,j'};x_{k,j},Q^r_{k,j'}),\label{xtoc}\end{equation} with the parameters $\hat r_{k,j'}$ and $Q^r_{k,j'}$ obtained via GAMP update equations specified in Algorithm \ref{alg1}. Plugging (\ref{xtoc}) into (\ref{ctox}) and combining the definition of node $c_k$, we obtain the following equations
\begin{align}
\nu_{x_{k,j}\leftarrow c_k}&(x_{k,j} = \rho_k) \nonumber\\
= &\frac{1}{2^L}\frac{1}{\pi^{2^L-1}\prod_{j'\neq j}Q^r_{k,j'}}\exp(-\sum\limits_{j'\neq j}\frac{|\hat r_{k,j'}|^2}{Q^r_{k,j'}}),\\
\nu_{x_{k,j}\leftarrow c_k}&(x_{k,j} = 0)\nonumber\\
 = &\frac{1}{2^L}\sum\limits_{i\neq j}\frac{1}{\pi^{2^L-1}\prod_{j'\neq j}Q^r_{k,j'}}\nonumber\\
 &\quad\times\exp(-\sum\limits_{j'\neq i,j}\frac{|\hat r_{k,j'}|^2}{Q^r_{k,j'}}-\frac{|\hat r_{k,i}-\rho_k|^2}{Q^r_{k,i}}).
\end{align}
As a result, the LLR (\ref{llr}) can be calculated as
$\text{LLR}_{k,j} = -\log\sum\nolimits_{i\neq j}\exp\left(\frac{|\hat r_{k,i}|^2}{Q^r_{k,i}}-\frac{|\hat r_{k,i}-\rho_k|^2}{Q^r_{k,i}}\right)$.
Based on the definition of LLR, we define a variable $\epsilon_{k,j}\triangleq (1+\exp(-\text{LLR}_{k,j}))^{-1}$, which can be interpreted as the prior probability that $x_{k,j} = \rho_{k}$ occurs with the knowledge of inferencing results of other nodes $x_{k,j'}, \forall j'\neq j$. Hence, the GAMP update equations of each index $j$ can be obtained by considering the current values of $\epsilon_{k,j}$, which will be further updated after executing the the GAMP update equations for all indices $j = 1\dots, 2^L$. The involved posterior marginal distribution $p_{X_{k,j}|\boldsymbol Y_j}$ can be derived as
\begin{align}
&\mathbb E_{X_{k,j}|\boldsymbol Y_j}(x_{k,j}|\mathbf y_j; \hat r_{k,j}, Q_{k,j}^r) = \rho_k\hat{P}(x_{k,j} = \rho_k),\\
&{\rm Var}_{x_{k,j}|\boldsymbol Y_j}(x_{k,j}|\mathbf y_j; \hat r_{k,j}, Q_{k,j}^r) \nonumber\\
&= \rho_k^2\left(\hat{P}(x_{k,j} = \rho_k)-\hat{P}^2(x_{k,j} = \rho_k)\right),
\end{align}
where we define a probability as
\begin{align}
\hat{P}&(x_{k,j} = \rho_k)\nonumber \\
&= \left(1+\frac{1-\epsilon_{k,j}}{\epsilon_{k,j}}\exp\left(-\frac{|\hat r_{k,j}|^2}{Q^r_{k,j}}+\frac{|\hat r_{k,j}-\rho_k|^2}{Q^r_{k,j}}\right)\right)^{-1}.\nonumber
\end{align}
After executing this algorithm, we adopt the MAP criterion to derive the estimated signal $\bar{\mathbf x}_k$ associated with channel $\mathbf h_k$. It is obtained by a hard thresholding of $\hat{\mathbf x}_k$ in each section of the second phase. Each component in $\hat{\mathbf x}_k$ corresponds to a posterior distribution that the value of the corresponding position is non-zero, and the MAP is thus implemented as follow. If $\hat{x}_{k,i}$ is the maximum among all the elements in $\hat{\mathbf x}_k$, then we let $\bar{x}_{k,i}$ to be the only one non-zero component in $\bar{\mathbf x}_k$.
\begin{figure*}
\subfigure[$L = 1$, $\bar{\sigma}^2 = 0.01$.]{
\begin{minipage}[t]{0.31\linewidth}
\centering
\includegraphics[width=2.3in]{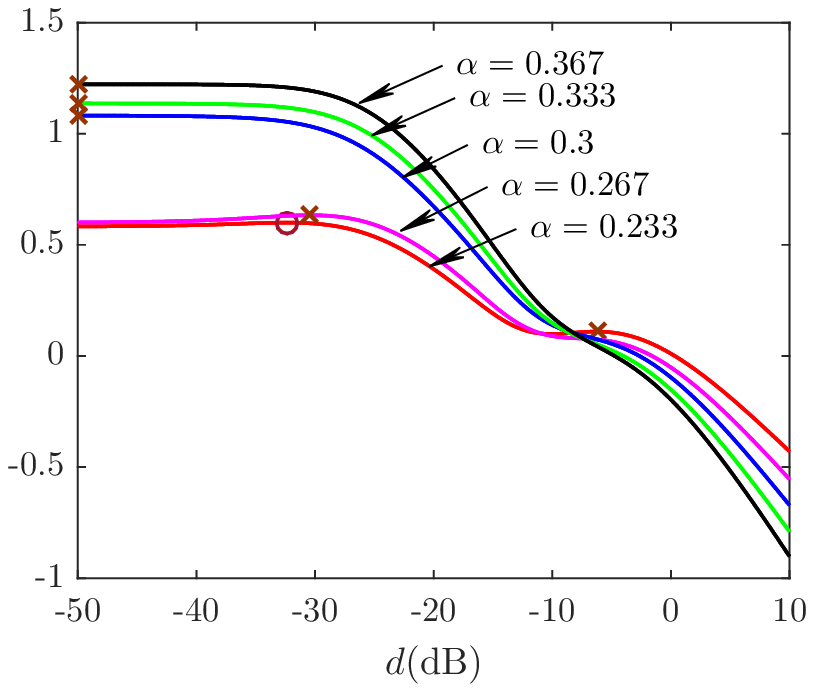}
\label{L1001}
\end{minipage}%
}
\subfigure[$L = 1$, $\bar{\sigma}^2 = 0.1$.]{
\begin{minipage}[t]{0.31\linewidth}
\centering
\includegraphics[width=2.3in]{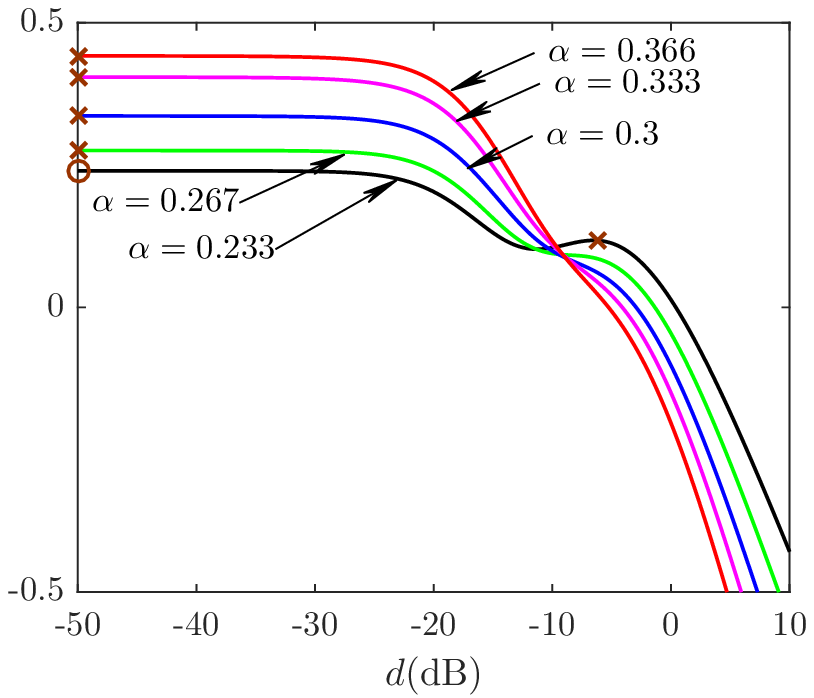}
\label{L101}
\end{minipage}
}
\subfigure[$L = 1$, $\bar{\sigma}^2 = 1$.]{
\begin{minipage}[t]{0.31\linewidth}
\centering
\includegraphics[width=2.3in]{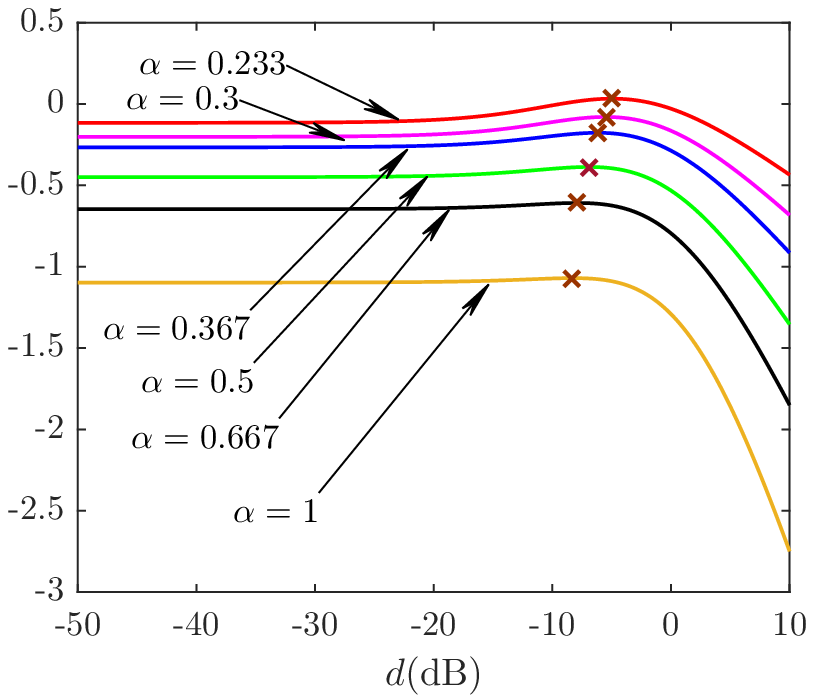}
\label{L11}
\end{minipage}
}

\subfigure[$L = 2$, $\bar{\sigma}^2 = 0.01$.]{
\begin{minipage}[t]{0.31\linewidth}
\centering
\includegraphics[width=2.3in]{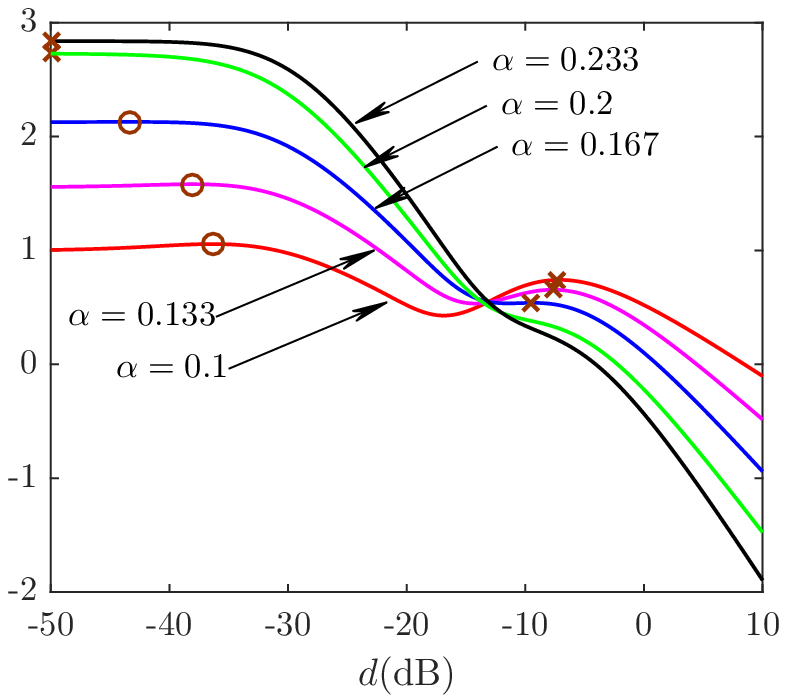}
\label{L2001}
\end{minipage}
}
\subfigure[$L = 2$, $\bar{\sigma}^2 = 0.1$.]{
\begin{minipage}[t]{0.31\linewidth}
\centering
\includegraphics[width=2.3in]{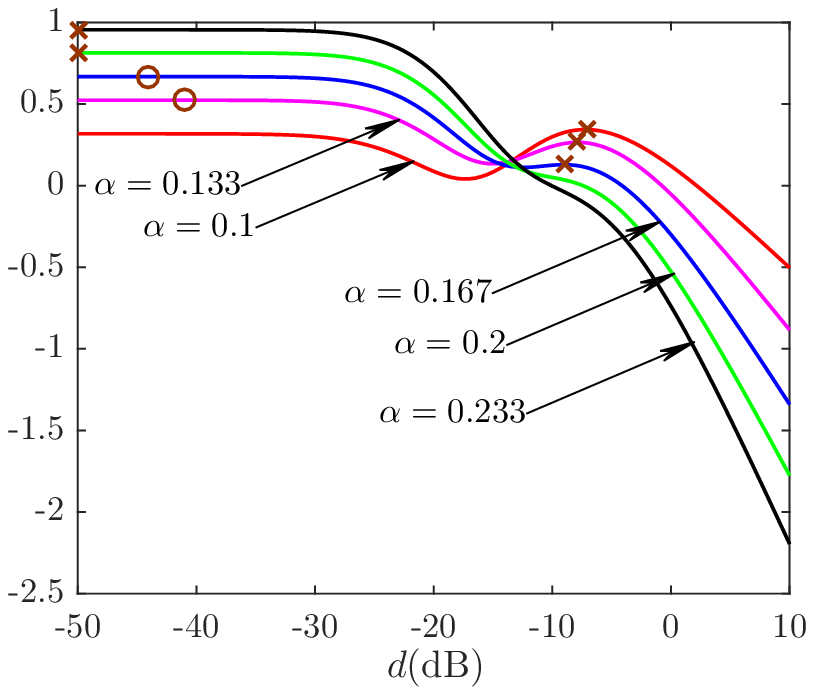}
\label{L201}
\end{minipage}
}
\subfigure[$L = 2$, $\bar{\sigma}^2 = 1$.]{
\begin{minipage}[t]{0.31\linewidth}
\centering
\includegraphics[width=2.3in]{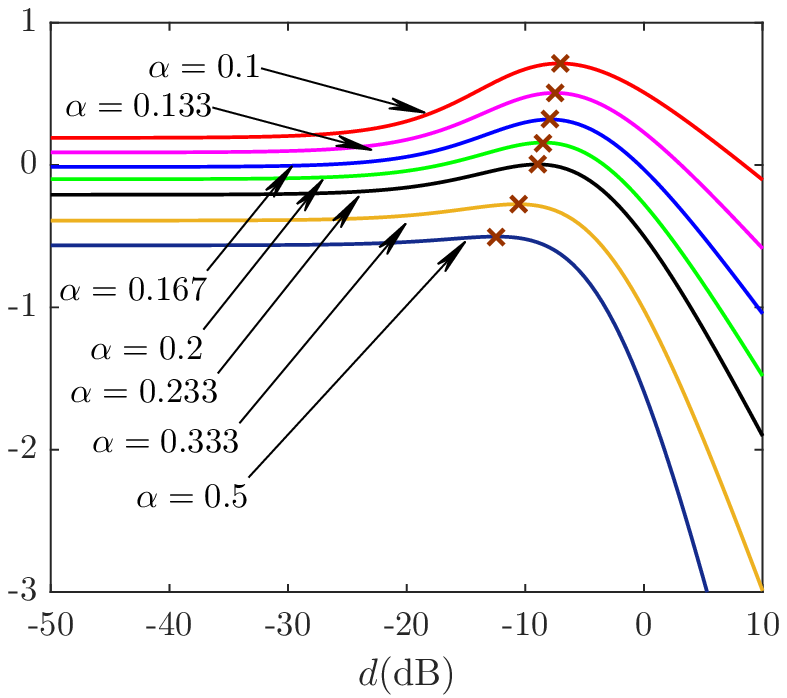}
\label{L21}
\end{minipage}
}
\subfigure[$L = 3$, $\bar{\sigma}^2 = 0.01$.]{
\begin{minipage}[t]{0.31\linewidth}
\centering
\includegraphics[width=2.3in]{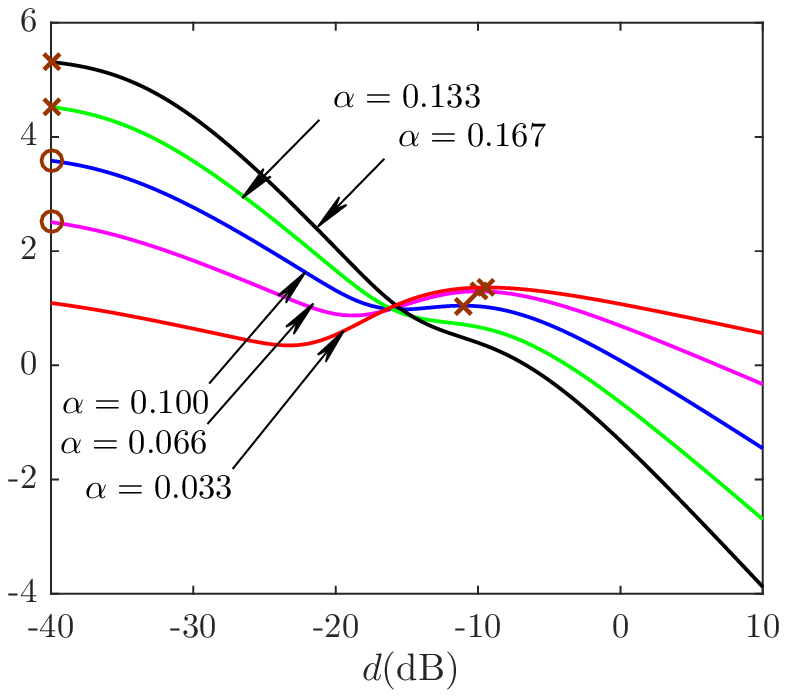}
\label{L3001}
\end{minipage}
}
\subfigure[$L = 3$, $\bar{\sigma}^2 = 0.1$.]{
\begin{minipage}[t]{0.31\linewidth}
\centering
\includegraphics[width=2.3in]{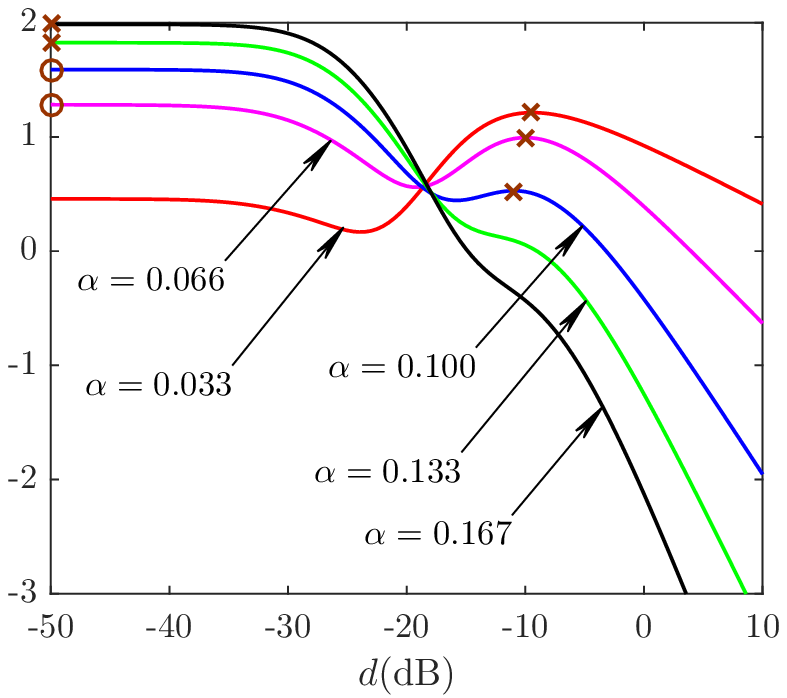}
\label{L301}
\end{minipage}
}
\subfigure[$L = 3$, $\bar{\sigma}^2 = 1$.]{
\begin{minipage}[t]{0.31\linewidth}
\centering
\includegraphics[width=2.3in]{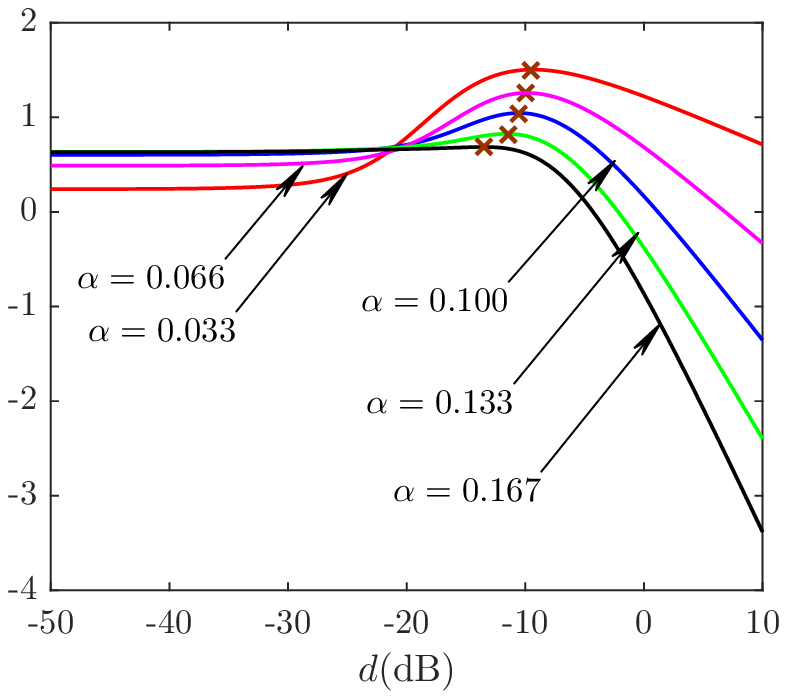}
\label{L31}
\end{minipage}
}
\caption{Free entropy as a function of MSE under different settings of $\bar{\sigma}^2$, $\alpha$ and $L$. (The ``red cross'' and ``red circle'' denote the extreme points of the free entropy function (\ref{phi2}).)}
\label{phase}
\end{figure*}

\subsection{Dealing with the Imperfect CSI}\label{ipcsi}
In previous sections, we have mentioned that the knowledge of CSI is imperfect after the first phase of the proposed URA strategy, which results in an extra inference term in the received signal of the second phase. Accordingly, the received signal model (\ref{equi_mod}) can be reformulated as
\begin{equation}
\mathbf Y = \hat{\mathbf S}\mathbf X+\boldsymbol\Xi+\mathbf I,\label{ipcsi_mod}
\end{equation}
where $\hat{\mathbf S}$ is the estimated channel matrix and  $\mathbf I\triangleq (\mathbf S-\hat{\mathbf S})\mathbf X$ is defined as the interference matrix induced by the channel estimation error. We consider each element in matrix $\mathbf I$ is i.i.d. Gaussian, since we assume the Rayleigh channels for the user devices.

To cope with such a problem, one direct way is to modify our proposed AMP-based algorithm by adding updates to noise variance in each iteration. For conciseness, we will continue to use the previous expressions of $\mathbf S$ and the interference matrix $\mathbf I$ is considered to be absorbed in the noise matrix $\boldsymbol\Xi$. We note that we have dropped the number of iterations in the former derivations for simplification, and we now use $t$ to denote the iteration indexing.

To update the noise variance in each $t$th iteration, we adopt the expectation-maximization (EM) algorithm \cite{EMGAMPSchniter}. In the E-step, we derive the posterior distribution $p(z_{m,j}|y_{m,j};\sigma^2(t))$ with fixed $\sigma^2(t)$,
\begin{align}
&p(z_{m,j}|y_{m,j}; \sigma^2(t))\nonumber \\
&\quad= C^{-1}p(y_{m,j};\sigma^2(t))\mathcal{N}_{\mathcal C}(z_{m,j};\hat{p}_{m,j}(t), Q_{m,j}^p(t))\nonumber\\
&\quad= \mathcal{N}_{\mathcal C}(z_{m,j};\hat z_{m,j}(t),Q_{m,j}^z(t)),
\end{align}
where $C$ is the normalizing constant. In the M-step, we derive the update of $\sigma^2$ by the following equation
\begin{align}
\sigma^2(t+1)&= \arg\max\limits_{\sigma^2>0}\sum\limits_{j = 1}^{2^L}\sum\limits_{m = 1}^M \int\nolimits_{z_{m,j}}p(z_{m,j}|y_{m,j}; \sigma^2(t))\nonumber\\
&\quad\quad\times\ln \mathcal{N}_{\mathcal C}(z_{m,j};y_{j,m},\sigma^2).
\end{align}
As a consequence, the update equation of the noise variance can be specified as
\begin{equation}
\sigma^2(t+1) = \frac{1}{2^LM}\sum\limits_{j = 1}^{2^L}\sum\limits_{m = 1}^M\left(|y_{m,j}-\hat z_{m,j}(t)|^2+Q_{m,j}^z(t)\right).\label{noi_up}
\end{equation}
After that, we will modify our proposed algorithm by adding the noise update equation (\ref{noi_up}) in each iteration. The concrete implementing schedule and details of the proposed algorithm are specified in the Algorithm \ref{alg1}. We note that in each sub-block of the second phase, we first implementing a matrix multiplication operation of complexity $\mathcal O(2^{2L}M)$, then we executing the Algorithm \ref{alg1} of complexity $\mathcal O(2^{L}KM)$ in each iteration. Since it is always the case that $K\gg 2^L$, the overall computational complexity of each iteration in the second phase is $\mathcal O((B-L_0)L^{-1}2^{L}KM)$. We further notice that the overall complexity in each iteration in the first phase is $\mathcal O(2^{L_0}nM)$, which equals the complexity in each sub-block of the divide-and-conquer strategies, such as \cite{Shyianov2021}. As aforementioned, we need $n$ in the same scale of $K$, and we have $(BL_0^{-1}-1)2^{L_0}> (B-L_0)L^{-1}2^{L}$ with $L_0> L$. Hence, the overall complexity of our propose URA scheme is much less.

\begin{figure*}
\begin{center}
\subfigure[$L = 2$.]{
\begin{minipage}[t]{0.31\linewidth}
\centering
\includegraphics[width=2.3in]{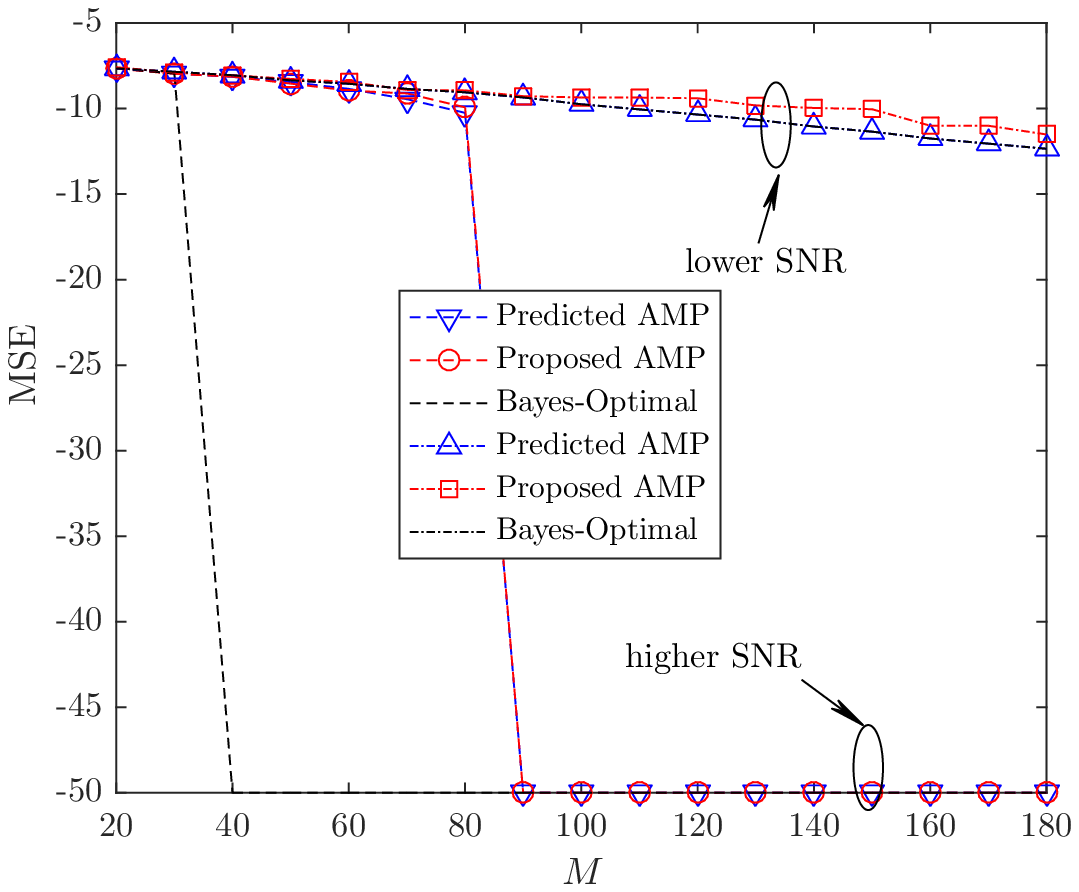}
\label{preL2}
\end{minipage}%
}
\subfigure[$L = 3$.]{
\begin{minipage}[t]{0.31\linewidth}
\centering
\includegraphics[width=2.3in]{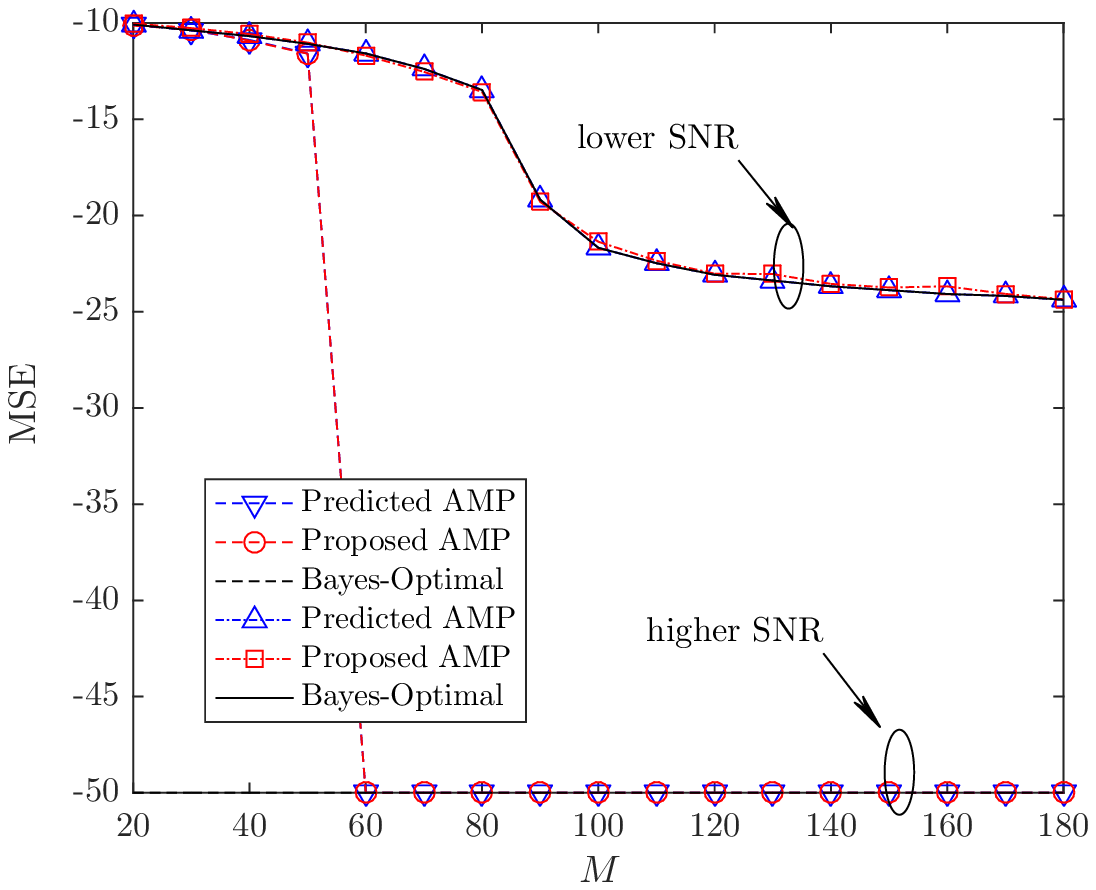}
\label{preL3}
\end{minipage}
}
\subfigure[$L = 4$.]{
\begin{minipage}[t]{0.31\linewidth}
\centering
\includegraphics[width=2.3in]{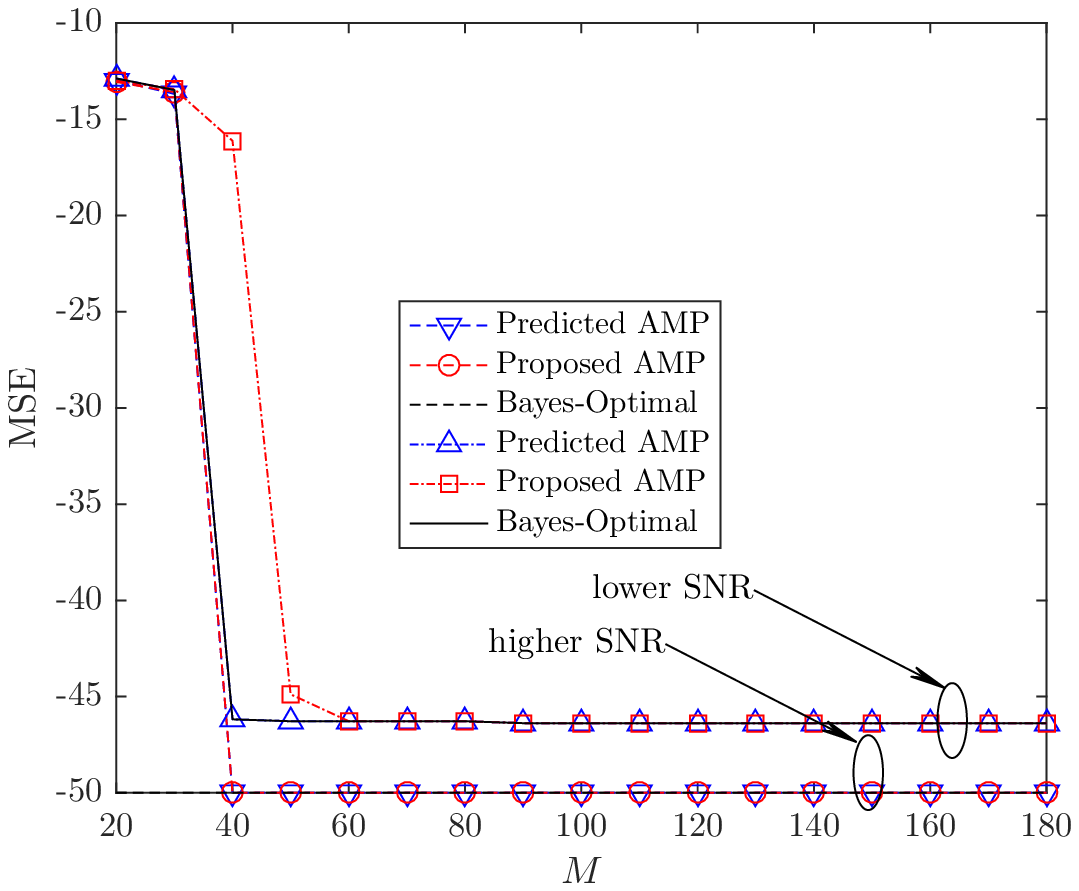}
\label{preL4}
\end{minipage}
}
\caption{MSE performances with different settings of $L$ versus the number of antennas $M$.}
\label{preL}
\end{center}
\end{figure*}


\section{Numerical Result}\label{num_res}
In this section, we verify the performance of the proposed URA strategy with several numerical results. We consider a Raleigh fading channel with unit power and we assume $\rho_k = 1, \forall k$ and the power factors are absorbed into the noise variances. For clarification, the variances of normalized additive noise in the first and second phases are denoted as $\tilde{\sigma}^2$ and $\bar{\sigma}^2$, respectively. In the following, both the verifications of the performance analysis and proposed decoding algorithm are provided. We also demonstrate the performances in the practical settings. Some comparisons are made with the divide-and-conquer counterparts in \cite{2019ale, Shyianov2021} and the pilot-based counterparts in \cite{PilotFengler, PilotAhmadi, FASURA}.

\subsection{Verifications of the Performance Analysis Results}
In this section, we investigate the performance analysis results for the proposed URA scheme, where we suppose the CSI error in the first phase is absorbed in the noise variance $\bar{\sigma}^2$ of the second phase. Concretely, we demonstrate the free entropy function given in (\ref{phi2}) with different system parameters, i.e., $\bar{\sigma}^2$, $\alpha$ and $L$. As aforementioned, the \emph{Bayes-optimal MSE} solution is determined by the extreme point of function (\ref{phi2}) associated with the largest value of $\Phi_2$, and the AMP algorithm always converge to the extreme point of function (\ref{phi2}) associated with the largest MSE, referred as \emph{AMP achievable MSE}. Based on that, we will show that the MSE performance reflected on the extreme points of (\ref{phi2}) can be divided into some regions.

Fig. \ref{phase} demonstrates the fixed point of free entropy function (\ref{phi2}), which reflects the MSE performance under different system parameters. We can observe from Fig. \ref{phase} that there are phase transitions with the variations of the parameter $\alpha$ under different values of $\bar{\sigma}^2$ and $L$, dividing the MSE performance into some particular regions. The phase transitions means the changes of the number of extreme points of the free entropy function, leading to that the performances of the decoding algorithm changes dramatically to achieve perfect recovery. Taking the Fig. {\ref{L201}} as an example, we can infer that there exists some hard thresholds that divides the MSE performance into regions. Concretely, there exists an $\alpha_1$ within the range $(0.1, 0.133)$, so that in the region $\alpha\in(0,\alpha_1)$, the Bayes-optimal MSE coincides the AMP achievable MSE. Within the range $(0.167, 0.2)$, there exists an $\alpha_2$, so that in the region $\alpha\in(\alpha_1,\alpha_2)$, there exists two extreme points of the function (\ref{phi2}) and the smaller extreme point leads to a larger value of the free entropy function, indicating the Bayes-optimal MSE. However, the AMP achievable MSE is blocked by another extreme point associated with the larger extreme point, so that there exists a performance gap between the AMP achievable MSE and Bayes-optimal MSE. Finally, in the region $\alpha\in(\alpha_2,\infty)$, the larger extreme point disappears, and the AMP achievable MSE and Bayes-optimal MSE coincides again. We note that the phase transition occurs in $\alpha = \alpha_2$, where there is a sudden change in the number of extreme points and in the value of the AMP achievable MSE.

We can further observe from Fig. \ref{phase} that the phase transition gradually disappears with increasing of the noise variance. When the value of $\bar{\sigma}^2 = 1$, the phase transitions in all the settings of $L$ disappear, and the MSE performance can be continuously improved by increasing the number of antennas with fixed number of user devices. On the other hand, the Fig. \ref{phase} also shows that the value of $\alpha_2$ where the phase transition occurs, will be reduced when the length of section $L$ increases. Noticing the cases with $\bar{\sigma}^2 = 0.1$, we obtain $\alpha_2\in(0.233,0.267)$ for the case of $L = 1$, $\alpha_2\in(0.167,0.2)$ for the case of $L = 2$ and $\alpha_2\in(0.1,0.133)$ for the case of $L = 3$. Such an observation is explainable, since with the increasing of $L$ the probability of the codeword collisions between user devices will be reduced, resulting in the enhancement of sparsity over each column of matrix $\mathbf X$. The above phase transition analysis provides the essential reasons of the URA performances of our pattern. In addition, such an analysis also explains the typical phenomenon that with a given CSI, we can support more user devices by increasing the number of antennas. This is because the performance is reflected by the ratio $\alpha$ between the number of antennas and user devices in a large system regime.

Fig. \ref{preL} investigates the MSE performances with different settings of parameter $L$ with respect to the number of antennas $M$, i.e., the Bayes-optimal MSE, AMP achievable MSE predicted by the free entropy function, referred as \emph{predicted AMP} and MSE obtained by implementing the proposed AMP algorithm in sec. \ref{pro_amp}, referred as \emph{proposed AMP}. The number of active users is set as $K = 500$. The notations \emph{lower SNR} and \emph{higher SNR} indicate the case with $\bar{\sigma}^2 = 0.01$ and $\bar{\sigma}^2 = 1$, respectively. We can observe from Fig. \ref{preL} that the MSE performance of our proposed AMP algorithm matches the MSE performance predicted by the replicated free entropy function well in both the cases of lower SNR and higher SNR. In addition, when the number of antennas is large enough, the MSE performance of the proposed AMP is Bayes-optimal, which is exactly the case that there exists only one extreme point of replicated free entropy function. Obviously, we can observe the phase transition phenomenon and the performance gap between Bayes-optimal MSE and AMP achievable MSE in the case of higher SNR. Particularly, the number of antennas required for occurring phase transition is reduced with the increase of $L$, corresponding to the results mentioned before. Another noteworthy phenomenon is that the recovery MSE decline rapidly with $M$ increases with larger $L$, even in the case of lower SNR. This implies that increasing the length of section is a workable manner to cope with the case of low SNR with massive MIMO.
Fig. \ref{preL} also implies an important result in the context of MIMO URA. With an enough energy, i.e., $\bar{\sigma}^2 = 0.01$, the BS can serve a large number of user devices, i.e., $K = 500$ with a very low $L = 2$ by providing a large number of antennas, i.e., $M = 90$. This means the overall length of the codewords in the data phase approximately linearly increases, as the increase of the number of the information bits, i.e., the overall length of the codewords in the data phase equals $2B$ with $B$ bits of information for the case of $L = 2$. This phenomenon indicates a significant advantage of the two-phase scheme compared with the existing divide-and-conquer counterparts, since they all map a piece of data with each sub-block to a long transmitted codeword, which is directly proportional to the number of the active users.
\begin{figure}
\subfigure[$n = 500$.]{
\begin{minipage}[t]{0.5\linewidth}
\centering
\includegraphics[width=3.5in]{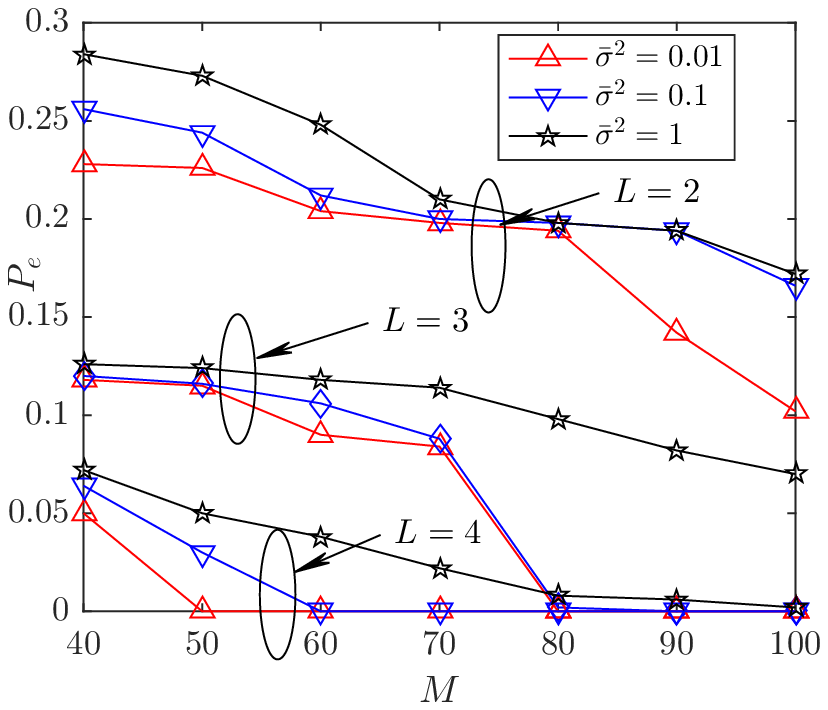}
\label{pen500}
\end{minipage}%
}
\subfigure[$n = 1000$.]{
\begin{minipage}[t]{0.5\linewidth}
\centering
\includegraphics[width=3.5in]{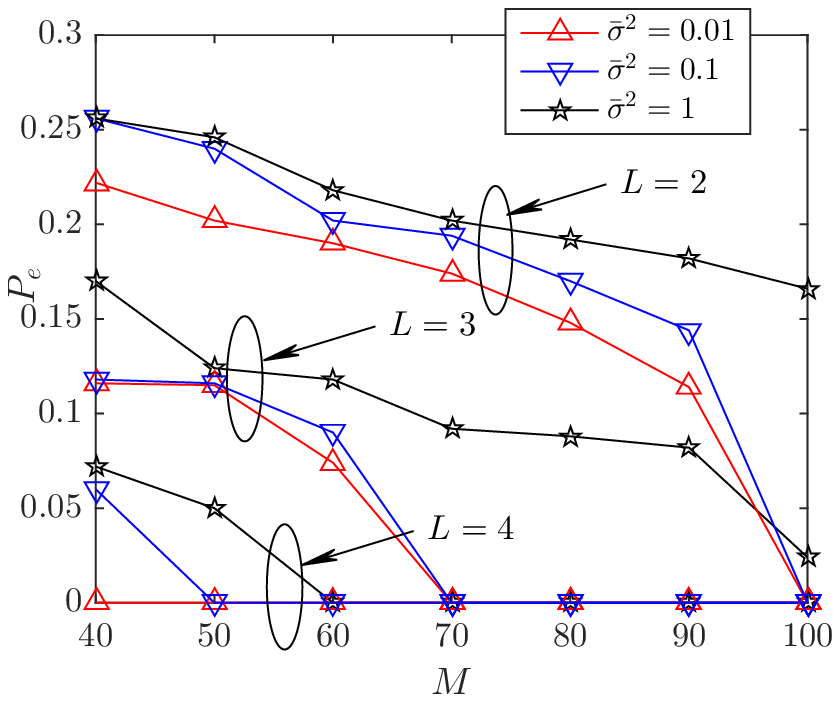}
\label{pen1000}
\end{minipage}
}
\caption{Error probability of our proposed URA strategy with different settings of $L$ versus the number of antennas $M$ in particular settings of system parameters in the first phase.}
\label{pepro}
\end{figure}

\subsection{Simulations in Practical Settings}
In this section, we verify the proposed two-phase URA strategy in the practical settings. We suppose the total $K = 500$ active user devices transmit information sequences of the same length $B = 100$. In the first phase, each user transmit a sub-block of length $L_0 = 16$. The codeword sent by each user is from a codebook consisting of $2^{L_0}$ components of length $n$, whose entries are generated from $\mathcal{CN}(0,1/n)$. To jointly estimate channel and recover messages, the standard AMP algorithm in \cite{Shyianov2021} is adopted. We note that we in this paper assume that there are no user codeword collisions, which can be guaranteed by adopting the collision protocol introduced in \cite{2022Li}. In the second phase, the residual messages for each user are partitioned into $J$ sub-blocks of equal length $L$ with $B-L_0 = LJ$. Our proposed AMP-based method with noise variance tuning is implemented based on the estimated channel. Note that the notation of the noise variance $\bar{\sigma}^2$ will not include the channel estimation error in the first phase.

Fig. \ref{pepro} investigates the error probability of our proposed URA strategy. Note that in the first phase, we set the parameter $\tilde{\sigma}^2 = 0.01$, and the length of codewords $n = 500$ and $n = 1000$. We can observe the phase transition phenomenon from Fig. \ref{pepro}, i.e., the error probability reduces to zero cross the threshold of some system parameters. We can intuitively notice that increasing the number of antennas, increasing the length of sub-blocks and increasing the system SNR benefit the performance of our scheme to achieve the perfect recovery. In addition, the increasing channel estimation error in the first phase increases the required number of antennas to achieve the perfect recovery in the second phase. Concretely, in the case of $n = 500$, the phase transition never occurs in the settings of $L = 2$ and $\bar{\sigma}^2 = 1$, even with $M = 100$ antennas. In other cases, $M = 80$ is enough for perfect recovery of information sequences. In the case of $n = 1000$, $M = 100$ is enough for nearly all the considered settings of system parameters, expected to the case of $\bar{\sigma}^2 = 1$ and $L = 2, 3$.
\begin{figure}
\centering
\includegraphics[width=3.5in]{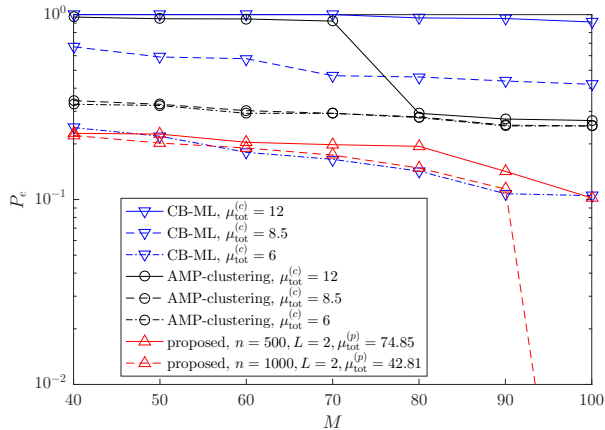}
\caption{The error probability of our URA scheme and the counterpart strategies versus the number of antennas $M$ with different settings of spectral efficiency. }
\label{comp}
\end{figure}

For demonstrating the strengths of our proposed URA strategy, we first make a comparison with the state-of-the-art divide-and-conquer counterparts proposed in \cite{2019ale, Shyianov2021}, which are the most recent strategies in our considered scenario to the best of our knowledge, referred as \emph{CB-ML} and \emph{AMP-clustering}, respectively. Concretely, the length of the information bits is set as $B = 96$ for all of them. The length of the bits in each sub-block is set as $12$. The number of the sub-blocks is denoted as $\bar{J}$. For CB-ML, the code rate is typically set as $0.25$, the number $\bar{J}$ therefore equals $32$. For AMP-clustering, we have $\bar{J} = 8$. In order to differentiate, for the counterpart strategies, we denote that the codeword in each sub-block sent by each user is of length $n_c$. We then define the total spectral efficiency as $\mu_{\rm tot} = KB/T_{\rm tot}$, where $T_{\rm tot}$ is defined as the total length of the transmitted symbols of the user devices in a transmission. For our proposed URA strategy, the total spectral efficiency is
$
\mu_{\rm tot}^{(p)} = \frac{KB}{n+(B-L_0)2^L/L},\nonumber
$
while the total spectral efficiency for the counterparts is $\mu_{\rm tot}^{(c)} = {KB}/{n_c\bar{J}}$.

Fig. \ref{comp} demonstrates the error probability of our URA scheme and the counterpart strategies versus the number of antennas $M$ with different settings of spectral efficiencies. We set the normalized noise variances as $0.01$.  Compared with these two counterparts, our proposed scheme has significant strengths, as seen in Fig. \ref{comp}. Even with $\mu_{\rm tot}^{(c)} = 6$ spectral efficiency, the divide-and-conquer counterparts can not achieve the low error probability performance. On the contrary, for our proposed algorithm, the length of codewords is set as $2^L = 4$ of each sub-block of the second phase. This results in a very high spectral efficiency. Specifically, we can observe that with $\mu_{\rm tot}^{(p)} = 42.81$ spectral efficiency, our proposed algorithm can achieve a very low error probability with $M = 100$. This result demonstrates that the two-phase URA scheme has a great strength compared with the divide-and-conquer counterparts in term of the spectral efficiency.

\begin{figure}[!t]
\centering
\includegraphics[width=3.5in]{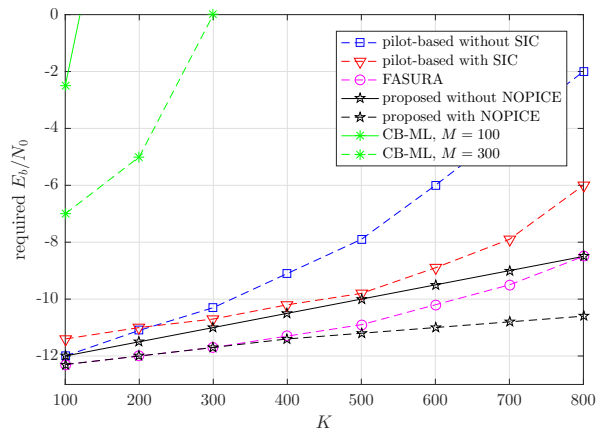}
\caption{$E_b/N_0$ versus $K$ curves of proposed algorithm and the typical baselines.}
\label{ebn0}
\end{figure}
To further demonstrate the energy efficiency of the proposed URA scheme, we compare our proposed algorithm with typical baselines by a series of $E_b/N_0$ versus $K$ curves, where the $E_b/N_0$ in dB is the energy-per-bit of the system. We consider the typical CB-ML algorithm as one of the baselines in the context of divide-and-conquer algorithms. We also choose three typical pilot-based baselines in \cite{PilotFengler, PilotAhmadi, FASURA}. The method in \cite{PilotFengler} is referred as \emph{pilot-based method without SIC}. The method in \cite{PilotAhmadi} is referred as \emph{pilot-based method with SIC}. The method in \cite{FASURA} is referred as \emph{FASURA}, which combined the pilot-based method with an additional operation NOPICE. The length of the overall channel uses is set as $3200$.  The power ratio between the pilot and the data phase is set as $1$. The error probability threshold is $0.05$. The total number of the information bits is set as $B = 100$. The number of the BS antennas is set as $M = 50$, unless additional notes are made.

\setcounter{TempEqCnt}{\value{equation}}
\setcounter{equation}{32}
\begin{figure*}[tbp]
\begin{align}
1 &= \int\left[\prod_{a = 1}^n{\rm d}\mathbf Q^{(a)}{\rm d}\hat{\mathbf Q}^{(a)}{\rm d}\mathbf M^{(a)}{\rm d}\hat{\mathbf M}^{(a)}\right]\left[\prod_{b,a<b}^{n(n-1)/2}{\rm d}\mathbf T^{(ab)}{\rm d}\hat{\mathbf T}^{(ab)}\right]\nonumber\\
 &\times \exp\left({\rm Tr}\left(-\sum\limits_{a = 1}^n\hat{\mathbf M}^{(a)T}\left(K\hat{\mathbf M}^{(a)}-\sum\limits_{k = 1}^K\mathbf x_{k}^T\hat{\mathbf x}_{k}^{(a)}\right)+\sum\limits_{a = 1}^n\hat{\mathbf Q}^{(a)T}\left(\frac{K}{2}\hat{\mathbf Q}^{(a)}-\frac{1}{2}\sum\limits_{k = 1}^K\mathbf x_{k}^{(a)T}\hat{\mathbf x}_{k}^{(a)}\right)\right.\right.\nonumber\\
 &\left.\left.-\sum\limits_{b,a<b}^{n(n-1)/2}\hat{\mathbf T}^{(ab)T}\left(K\hat{\mathbf T}^{(ab)}-\sum\limits_{k = 1}^K\mathbf x_{k}^{(a)T}\hat{\mathbf x}_{k}^{(b)}\right)\right)\right).\label{iden}
\end{align}
\begin{align}
&\mathbb E_{\mathbf S,\mathbf Y}(\mathcal Z(\mathbf Y, \mathbf S)^n) = \int\left[\prod_{a = 1}^n{\rm d}\mathbf Q^{(a)}{\rm d}\hat{\mathbf Q}^{(a)}{\rm d}\mathbf M^{(a)}{\rm d}\hat{\mathbf M}^{(a)}\right]\left[\prod_{b,a<b}^{n(n-1)/2}{\rm d}\mathbf T^{(ab)}{\rm d}\hat{\mathbf T}^{(ab)}\right]\prod_{m = 1}^M\left|\mathbf I_{2^Ln}+\sigma^{-2}\mathbf G\right|^{-1}\nonumber\\
\times&\exp\left(K{\rm Tr}\left(-\sum\limits_{a = 1}^n\hat{\mathbf M}^{(a)T}\mathbf M^{(a)}+\sum\limits_{a = 1}^n\frac{1}{2}\hat{\mathbf Q}^{(a)T}\mathbf Q^{(a)}-\sum\limits_{b,a<b}^{n(n-1)/2}\hat{\mathbf T}^{(ab)T}\mathbf T^{(ab)}\right)\right)\left(\sum\limits_{\mathbf x,\hat{\mathbf x}^{(1)},\dots,\hat{\mathbf x}^{(n)}}p(\mathbf x)\right.\nonumber\\
\times&\left.\prod_{a = 1}^np(\hat{\mathbf x}^{(a)})\exp\left({\rm Tr}\left(\sum\limits_{a = 1}^n\hat{\mathbf M}^{(a)T}\mathbf x^T\hat{\mathbf x}^{(a)}-\frac{1}{2}\sum\limits_{a = 1}^n\hat{\mathbf Q}^{(a)T}\hat{\mathbf x}^{(a)T}\hat{\mathbf
x}^{(a)}+\sum\limits_{b,a<b}^{n(n-1)/2}\hat{\mathbf T}^{(ab)T}\hat{\mathbf x}^{(a)T}\hat{\mathbf x}^{(b)}\right)\right)\right)^K.\label{ave_rep_par}
\end{align}
\setcounter{equation}{36}
\begin{align}
&\mathcal X_m= \left|\mathbf I_{2^Ln}+\sigma^{-2}\mathbf G\right| \nonumber\\
&\overset{\lim\limits_{n\to 0}}{\approx}\left(\mathbf I_{2^Ln}+n{\rm Tr}\left(\left(\boldsymbol\Delta+\frac{K}{M}(\mathbf Q-\mathbf T)\right)^{-1}\left(\frac{K}{M}(\mathbf C+\mathbf T-\mathbf M-\mathbf M^T)+\boldsymbol\Delta\right)\right)\right)\left|\mathbf I_{2^Ln}+\frac{K(\mathbf Q-\mathbf T)}{M\sigma^2}\right|^n\nonumber\\
&\overset{\lim\limits_{n\to 0}}{\approx}\exp\left(n{\rm Tr}\left(\left(\boldsymbol\Delta+\frac{K}{M}(\mathbf Q-\mathbf T)\right)^{-1}\left(\frac{K}{M}(\mathbf C+\mathbf T-\mathbf M-\mathbf M^T)+\boldsymbol\Delta\right)\right)+\log\left|\mathbf I_{2^Ln}+\frac{K(\mathbf Q-\mathbf T)}{M\sigma^2}\right|\right).\label{xm2}
\end{align}
\setcounter{equation}{38}
\begin{align}
&\tilde{\Phi}\left(\mathbf M, \mathbf Q, \mathbf T, \hat{\mathbf M}, \hat{\mathbf Q}, \hat{\mathbf T}\right)\nonumber\\
&= {\rm Tr}\left(-\hat{\mathbf M}^T\mathbf M+\frac{1}{2}\hat{\mathbf Q}^T\mathbf Q+\frac{1}{2}\hat{\mathbf T}^T\mathbf T\right)-\alpha{\rm Tr}\left(\left(\boldsymbol\Delta+\alpha^{-1}(\mathbf Q-\mathbf T)\right)^{-1}\left(\alpha^{-1}(\mathbf C+\mathbf T-\mathbf M-\mathbf M^T)+\boldsymbol\Delta\right)\right)\nonumber\\
&-\alpha\log\left|\mathbf I_{2^Ln}+\frac{\mathbf Q-\mathbf T}{\alpha\sigma^2}\right|+\sum\limits_{\mathbf x}p(\mathbf x)\int D\mathbf z\log\left(\sum\limits_{\hat{\mathbf x}}p(\hat{\mathbf x})\exp\left({\rm Tr}\left(\hat{\mathbf M}^T\mathbf x^T\hat{\mathbf x}-\frac{1}{2}\hat{\mathbf Q}^T\hat{\mathbf x}^{T}\hat{\mathbf x}\right.\right.\right.\nonumber\\
&\left.\left.\left.-\hat{\mathbf T}^{\frac{1}{2}}\sqrt{2}\mathfrak R(\mathbf z^T)\hat{\mathbf x}-\frac{1}{2}\hat{\mathbf T}\hat{\mathbf x}^{T}\hat{\mathbf x}\right)\right)\right).\label{tid_phi}
\end{align}
\hrulefill
\end{figure*}
\setcounter{equation}{\value{TempEqCnt}}
For our proposed algorithm, we set two groups, which are both assigned $1600$ channel uses. For each group, we set $n = 704$ in the first phase, and we set $L = 6$ and provide $896$ channel uses for the second phase. Each user randomly chooses its group. Fig. \ref{ebn0} demonstrates the $E_b/N_0$ versus $K$ curves of proposed algorithm and the typical baselines. We can observe that even with $M = 100$ or $M = 300$, there is still a large performance gap between the CB-ML algorithm and the other two-phase algorithms in the quasi-static channel. We can also observe that our proposed algorithm achieves a significant better performance compared with both pilot-based method without SIC and the pilot-based method with SIC. This is because our algorithm is based on the MMSE criterion, which considers the prior distribution (\ref{con_pri}) and the channel estimation error to recover the index matrix/information bits. On the contrary, the baselines use a simple maximal-ratio combiner (MRC) or linear MMSE to recover the codewords, and use a list decoder to recover the information bits. Such a operation causes a loss of information because they do not utilize the prior knowledge of the coding structure and channel estimation error in the recovery of the codewords. In addition, we can see that the FASURA improves the performance of the pilot-based method by the NOPICE. For fair comparison, we add an additional NOPICE in the same way in our proposed algorithm, referred as \emph{proposed with NOPICE}. This can be done by reconstructing the codewords based on the estimated index matrix in each sub-block. After that, we can see that the proposed algorithm with NOPICE achieves a better energy efficiency than the FASURA. This result shows the great advantage of our scheme in term of the energy efficiency.

\section{Conclusion}

This paper proposed a novel two-phase URA strategy. In the first phase, the CSI knowledge was acquired. By adopting the CSI, the sub-block recovery problem in the second phase can be formulated as a specific CS recovery problem. We theoretically showed that the codeword collisions caused by the short length of sub-blocks can be coped with by equipping the enough number of the BS antennas. For the fixed number of active users, the larger number of antennas allows us to design shorter lengths of sub-blocks with the given requirement of error probability. Based on our theoretical framework, we established the connection between the decoding performances of our URA scheme and the system parameters. The phase transition phenomenons were also shown to reveal the essential reasons of the decoding performances. We designed the AMP-based decoder for the second phase with noise variance tuning to cope with the channel estimation error. By adopting our proposed URA strategy with massive MIMO, we achieved the performance improvement compared with the most recent counterparts.


\appendix
\subsection{Proof of Theorem 1}\label{ap1}
The replicated partition function can be calculated via
\begin{align}
\mathbb E_{\mathbf S,\mathbf Y}&(\mathcal Z(\mathbf Y, \mathbf S)^n)\nonumber\\
&= \sum\limits_{\mathbf X, \hat{\mathbf X}^{(1)},\dots, \hat{\mathbf X}^{(n)}}p(\mathbf X)\prod_{a = 1}^np(\hat{\mathbf X}^{(a)})\prod_{m = 1}^M\mathcal X_m,\label{EZn}
\end{align}
where
\begin{align}
\mathcal X_m = & \mathbb E_{\mathbf S,\boldsymbol\Xi}\left(\prod_{a = 1}^n\exp\left(-\sigma^{-2}\left(\sum\limits_{k = 1}^Ks_{m,k}(\mathbf x_k-\hat{\mathbf x}_{k}^{(a)})+\boldsymbol\xi_k\right)\right.\right.\nonumber\\
&\quad\quad\quad\times\left.\left.\left(\sum\limits_{k = 1}^Ks_{m,k}(\mathbf x_k-\hat{\mathbf x}_{k}^{(a)})+\boldsymbol\xi_k\right)^H\right)\right).\nonumber
\end{align}
In the above equations, $a = 1,\dots, n$ denotes the replica indices, $p(\mathbf X)$ is the true prior distribution over signal $\mathbf X$ and $\boldsymbol\xi_k, k = 1,\dots, K$ is the row of the noise matrix $\boldsymbol\Xi$. To calculate the term $\mathcal X_m$ in (\ref{EZn}), we define a random vector $\mathbf v_m\triangleq [\boldsymbol\nu_{m}^{(1)}, \dots, \boldsymbol\nu_{m}^{(n)}]\in\mathbb C^{1\times 2^Ln}$ with each element $\boldsymbol\nu_{m}^{(a)} = K^{-1}\sum\nolimits_{k = 1}^Ks_{m,k}(\mathbf x_k-\hat{\mathbf x}_{k}^{(a)})+\boldsymbol\xi_k$. Using the fact each element of the matrix $\mathbf S$ as well as $\boldsymbol\Xi$ is i.i.d. Gaussian random variable, the $\mathbf v_m$ obeys a joint distribution with the following moments formulated as
\begin{align}
&\mathbb E_{\mathbf v_m}(\boldsymbol\nu_{m}^{(a)H}\boldsymbol\nu_{m}^{(a)}) = \frac{1}{M}\sum\nolimits_{k = 1}^K(\mathbf x_k-\hat{\mathbf x}_{k}^{(a)})^T(\mathbf x_k-\hat{\mathbf x}_{k}^{(a)})+\boldsymbol\Delta.\nonumber\\
&\mathbb E_{\mathbf v_m}(\boldsymbol\nu_{m}^{(a)H}\boldsymbol\nu_{m}^{(b)}) = \frac{1}{M}\sum\nolimits_{k = 1}^K(\mathbf x_k-\hat{\mathbf x}_{k}^{(a)})^T(\mathbf x_k-\hat{\mathbf x}_{k}^{(b)})+\boldsymbol\Delta.\nonumber
\end{align}
Now we define some new parameters, referred as overlaps:
\begin{align}
\mathbf C&\triangleq K^{-1}\sum\nolimits_{k = 1}^K\mathbf x_k^T\mathbf x_k, \mathbf Q^{(a)} \triangleq K^{-1}\sum\nolimits_{k = 1}^K\hat{\mathbf x}_{k}^{(a)T}\hat{\mathbf x}_{k}^{(a)},\nonumber\\
\mathbf M^{(a)} &\triangleq  K^{-1}\sum\nolimits_{k = 1}^K\mathbf x_{k}^T\hat{\mathbf x}_{k}^{(a)}, \mathbf T^{(ab)} \triangleq K^{-1}\sum\nolimits_{k = 1}^K\hat{\mathbf x}_{k}^{(a)T}\hat{\mathbf x}_{k}^{(b)}.\label{ove}
\end{align}
We note the integrand depends on the $\mathbf X$ and its replicas only through the considering overlaps, and
\begin{align}
&\mathbb E_{\mathbf v_m}(\boldsymbol\nu_{m}^{(a)H}\boldsymbol\nu_{m}^{(a)}) = \frac{K}{M}\left(\mathbf C-\mathbf M^{(a)}-\mathbf M^{(a)T}+\mathbf Q^{(a)}\right)+\boldsymbol\Delta,\nonumber\\
&\mathbb E_{\mathbf v_m}(\boldsymbol\nu_{m}^{(a)H}\boldsymbol\nu_{m}^{(b)}) = \frac{K}{M}\left(\mathbf C-\mathbf M^{(a)}-\mathbf M^{(b)T}+\mathbf T^{(ab)}\right)+\boldsymbol\Delta.\nonumber
\end{align}
The covariance matrix of $\bf{v}_m$ is denoted as $\mathbf G\in\mathbb C^{2^Ln\times 2^Ln}$, which is independent of index $m$, reads that $\forall a,b$, the corresponding block $\mathbf G^{(ab)}\in\mathbb C^{2^L\times 2^L} = \mathbb E_{\mathbf v_m}(\boldsymbol\nu_{m}^{(a)H}\boldsymbol\nu_{m}^{(b)})$. Accordingly, we obtain
\begin{align}
\mathcal X_m = \int{\rm d}\mathbf vp(\mathbf v)\exp(-\sigma^{-2}\sum\limits_{a = 1}^n||\boldsymbol\nu_{m}^{(a)}||^2)= \left|\mathbf I_{2^Ln}+\sigma^{-2}\mathbf G\right|^{-1}.\label{Xm}
\end{align}
To enforce the overlaps in (\ref{EZn}) to satisfy their definitions (\ref{ove}), we plug an identity based on the inverse Fourier transform of the Dirac delta function, given by (\ref{iden}). The newly defined parameters $\{\hat{\mathbf Q}^{(a)}, \hat{\mathbf M}^{(a)}, \hat{\mathbf T}^{(ab)}\}$ are called conjugated parameters, which enforcing the consistency conditions of (\ref{ove}). We note that the identity in (\ref{iden}) is obtained heuristically from that in \cite{Barbier2017}. Plugging (\ref{iden}) into (\ref{EZn}) and combining (\ref{Xm}), we obtain (\ref{ave_rep_par}). We then introduce the R-S assumption, stating that all the considering overlaps are independent of the replica indices, reading
\setcounter{equation}{34}
\begin{align}
\mathbf Q^{(a)} &= \mathbf Q,  \mathbf T^{(ab)} = \mathbf T, \mathbf M^{(a)} = \mathbf M,\nonumber\\
\hat{\mathbf Q}^{(a)} &= \hat{\mathbf Q}, \hat{\mathbf T}^{(ab)} = \hat{\mathbf T}, \hat{\mathbf M}^{(a)} = \hat{\mathbf M}.\label{rs}
\end{align}
The R-S assumptions simply the expression of average replicated function. By combining the R-S equations (\ref{rs}) with Hubbard-Stratonovich (H-S) transform, we further obtain
\begin{align}
&\exp\left({\rm Tr}\left(\sum\limits_{b,a<b}^{n(n-1)/2}\hat{\mathbf T}^{(ab)T}\hat{\mathbf x}^{(a)T}\hat{\mathbf x}^{(b)}\right)\right)\nonumber\\
 = &\int D\mathbf z\exp\left({\rm Tr}\left(\sum\limits_{a = 1}^n\hat{\mathbf T}^{\frac{1}{2}}\sqrt{2}\mathfrak R(\mathbf z^T)\hat{\mathbf x}^{(a)}\right.\right.\nonumber\\
 &\left.\quad\quad\quad\quad\quad\quad\quad-\frac{1}{2}\sum\limits_{a = 1}^n\hat{\mathbf T}\hat{\mathbf x}^{(a)T}\hat{\mathbf x}^{(a)}\right)\nonumber
\end{align}
On the other hand, by adopting (\ref{rs}), the covariance matrix $\mathbf G$ can be further calculated as
\begin{align}
\mathbf G =& \boldsymbol\amalg_n\otimes\left(\frac{K}{M}\left(\mathbf C+\mathbf T-\mathbf M-\mathbf M^T\right)+\boldsymbol\Delta\right)\nonumber\\
\quad+&\mathbf I_n\otimes \frac{K}{M}\left(\mathbf Q-\mathbf T\right),
\end{align}
where $\boldsymbol\amalg_n$ stands for the $n\times n$ matrix with elements all equal to one. Accordingly, the eigenvalue set of $\mathbf G$ consists of the $2^L$ eigenvalues of matrix $\frac{K}{M}\left(\mathbf Q-\mathbf T\right)+n\left(\frac{K}{M}\left(\mathbf C+\mathbf T-2\mathbf M\right)+\boldsymbol\Delta\right)$, and $(n-1)$ subsets of $2^L$ eigenvalues with each subset consisting of the eigenvalues of $\frac{K}{M}\left(\mathbf Q-\mathbf T\right)$. Accordingly, we obtain (\ref{xm2}). By further considering the approximation $\lim\limits_{n\to 0}\mathbb E_{\mathbf x}\int D\mathbf z f(\mathbf x, \mathbf z)^n\approx \exp\left(n\mathbb E_{\mathbf x}\int D\mathbf z\log f(\mathbf x, \mathbf z)\right)$, the $\mathbb E_{\mathbf S,\mathbf Y}(\mathcal Z(\mathbf Y, \mathbf S)^n)$ can be reformulated as
\setcounter{equation}{37}
\begin{align}
\mathbb E_{\mathbf S,\mathbf Y}&(\mathcal Z(\mathbf Y, \mathbf S)^n) = \int{\rm d}\mathbf Q{\rm d}\hat{\mathbf Q}{\rm d}\mathbf M{\rm d}\hat{\mathbf M}{\rm d}\mathbf T{\rm d}\hat{\mathbf T}\nonumber\\
&\exp\left(Kn\tilde{\Phi}\left(\mathbf M, \mathbf Q, \mathbf T, \hat{\mathbf M}, \hat{\mathbf Q}, \hat{\mathbf T}\right)\right),\label{in}
\end{align}
where the involved function $\tilde{\Phi}$ can be formulated as (\ref{tid_phi}). To calculate the free entropy (\ref{rep_tri}), in the limit of $K$ and $n$, the integral in (\ref{in}) can be estimated by saddle point method, which is performed by taking the extremum of the potential (\ref{tid_phi}) with respect to the free parameters. Performing derivatives with respect to $\mathbf M$, $\mathbf T$ and $(\mathbf Q-\mathbf T)$, and setting the derivatives into zero, the extremum values satisfy
\setcounter{equation}{39}
\begin{align}
&\hat{\mathbf M}^{\star T}= \hat{\mathbf Q}^{\star T}+\hat{\mathbf T}^{\star T} = 2\left(\boldsymbol\Delta+\alpha^{-1}\left(\mathbf Q^{\star}-\mathbf T^{\star}\right)\right)^{-1}\nonumber\\
&\hat{\mathbf T}^{\star T} = 2\left(\boldsymbol\Delta+\alpha^{-1}\left(\mathbf Q^{\star}-\mathbf T^{\star}\right)\right)^{-1}\nonumber\\
&\quad\quad\times\left(\alpha^{-1}\left(\mathbf C+\mathbf T^{\star}-\mathbf M^{\star}-\mathbf M^{\star T}\right)+\boldsymbol\Delta\right)\nonumber\\
&\quad\quad\times\left(\boldsymbol\Delta+\alpha^{-1}\left(\mathbf Q^{\star}-\mathbf T^{\star}\right)\right)^{-1}.\label{extr}
\end{align}
We consider the Bayes-optimal case, where the distribution $p(\hat{\mathbf x})$ matches the true priori $p(\mathbf x)$, and we thus have
$\mathbf T ^{\star}= \mathbf M^{\star} = \mathbf M^{\star T}, \mathbf Q^{\star} = \mathbf C$. Since only the extrema of the potential (\ref{tid_phi}) that influence the estimate of saddle point method, we can substituting these conditions into (\ref{tid_phi}). Then, we can derive the functional form of the free entropy function in (\ref{fre_ent_E}) by setting $\mathbf E = \mathbf C-\mathbf M$. As a consequence, the \emph{Theorem 1} holds.

\setcounter{TempEqCnt}{\value{equation}}
\setcounter{equation}{40}
\begin{figure*}[tbp]
\begin{align}
&\left(\boldsymbol\Delta+\alpha^{-1}\mathbf E\right)^{-1} = \left(\sigma^2+\alpha^{-1}(a-b)\right)^{-1}\mathbf I_{2^L}-\alpha^{-1}b(\sigma^2+\alpha^{-1}(a-b))^{-1}\nonumber\\
&\quad\quad\quad\times(\sigma^2+\alpha^{-1}(a-b)+2^L\alpha^{-1}b)^{-1}\boldsymbol\amalg_{2^L},\label{woodburry1}\\
&\left(\boldsymbol\Delta+\alpha^{-1}\mathbf E\right)^{-\frac{1}{2}} = \left(\sigma^2+\alpha^{-1}(a-b)\right)^{-\frac{1}{2}}\mathbf I_{2^L}-\alpha^{-1}b\left(\sigma^2+\alpha^{-1}(a-b)+2^L\alpha^{-1}b\right)^{-\frac{1}{2}}\nonumber\\
&\quad\quad\times \left(\sigma^2+\alpha^{-1}(a-b)\right)^{-\frac{1}{2}}\left(\left(\sigma^2+\alpha^{-1}(a-b)\right)^{\frac{1}{2}}
+\left(\sigma^2+\alpha^{-1}(a-b)+2^L\alpha^{-1}b\right)^{\frac{1}{2}}\right)^{-1}\boldsymbol\amalg_{2^L}.\label{woodburry2}
\end{align}
\begin{align}
\Phi_1(a-b,b)& = 2^L\left(\frac{b(\sigma^2+\alpha^{-1}c)}{(\sigma^2+\alpha^{-1}(a-b)+2^L\alpha^{-1}b)(\sigma^2+\alpha^{-1}(a-b))}-\frac{\alpha\sigma^2+c}{\sigma^2+\alpha^{-1}(a-b)}\right)\nonumber\\
&-\alpha\log\left(\left(\sigma^2+\alpha^{-1}(a-b)+2^L\alpha^{-1}b\right)\left(\sigma^2+\alpha^{-1}(a-b)\right)^{2^L-1}\right)+\int D\mathbf z\log\Bigg(\sum\limits_{i = 1}^{2^L}\frac{1}{2^L}\nonumber\\
&\times\exp\bigg(-\frac{1}{\sigma^2+\alpha^{-1}(a-b)}-\frac{\alpha^{-1}b}{(\sigma^2+\alpha^{-1}(a-b)+2^L\alpha^{-1}b)(\sigma^2+\alpha^{-1}(a-b))}\nonumber\\
&+\frac{\sqrt{2}z_i}{(\sigma^2+\alpha^{-1}(a-b))^{\frac{1}{2}}}\bigg)+\frac{1}{2^L}\exp\bigg(-\frac{\alpha^{-1}b}{(\sigma^2+\alpha^{-1}(a-b)+2^L\alpha^{-1}b)(\sigma^2+\alpha^{-1}(a-b))}\nonumber\\
&+\frac{1}{\sigma^2+\alpha^{-1}(a-b)}+\frac{\sqrt{2}z_1}{(\sigma^2+\alpha^{-1}(a-b))^{\frac{1}{2}}}\bigg)\Bigg).\label{phi1}
\end{align}
\hrulefill
\end{figure*}
\setcounter{equation}{\value{TempEqCnt}}
\subsection{Proof of Proposition 1}\label{ap2}
Based on (\ref{symm}), we obtain the derivations using the Woodbury formula, which is given by (\ref{woodburry1})-(\ref{woodburry2}). In consideration of (\ref{symm}), (\ref{woodburry1}) and (\ref{woodburry2}), the function $\bar{\Phi}$ can be directly reformulated as a function of argument $(a-b)$ and $b$, given by (\ref{phi1}). We can observe from the equation (\ref{phi1}) that the parameter $b$ is independent of the integrand by defining the argument $(a-b)$. Accordingly, it is straightforward to take the derivative with respect to $b$ into zero. By combining the fact $c = 1/2^L$ that is inferred from the considered URA scenario setting, we obtain that the extreme point of argument $b$ satisfies
\setcounter{equation}{43}
\begin{equation}
b^{\star} = \frac{2^Lc-2^L(a-b)-1}{2^{2L}} = -\frac{a-b}{2^{L}},\label{b_star}
\end{equation}
for any parameter value of $(a-b)$ that satisfies its definition, i.e., $a-b\ge 0$. By plugging (\ref{b_star}) into (\ref{phi1}) with the denotation $d\triangleq a-b$, we can directly obtain a free entropy function with single scalar argument $d$, which is given by (\ref{phi2}).

\subsection{Proof of Proposition 2}\label{ap3}
Reviewing the equation (\ref{tid_phi}) and (\ref{extr}), the fixed point condition with respect to parameter $\hat{\mathbf M}$ is
\begin{align}
\mathbf M^{\star} = &\int D\mathbf z\sum\limits_{\mathbf x}p(\mathbf x)\nonumber\\
\times&\mathbf x^T \boldsymbol\eta\left(\left(\mathbf x+\mathbf z(\boldsymbol\Delta+\frac{1}{\alpha}\mathbf E^{\star})^{\frac{1}{2}}\right), \boldsymbol\Delta+\frac{1}{\alpha}\mathbf E^{\star}\right).
\end{align}
The function (\ref{eta}) can be interpreted as the MMSE denoiser over system (\ref{rmod}) with the knowledge of (\ref{con_pri}) and a particular element of (\ref{eta}) can be interpreted as the posterior probability that the component in this position is non-zero. We then have
\begin{align}
\mathbf E^{\star} = &\mathbf C-\int D\mathbf z\sum\limits_{\mathbf x}p(\mathbf x)\mathbf x^T\boldsymbol\eta(\mathbf r, \boldsymbol\Sigma)\nonumber\\
= &\int D\mathbf z\sum\limits_{\mathbf x}p(\mathbf x)\left(\mathbf x-\boldsymbol\eta(\mathbf r, \boldsymbol\Sigma)\right)^T\left(\mathbf x-\boldsymbol\eta(\mathbf r, \boldsymbol\Sigma)\right).\label{SE}
\end{align}
We can observe from (\ref{SE}) that $\mathbf E^{\star}$ can be regarded as the error matrix of signal model (\ref{rmod}). Following the statements in \cite{Tanaka2002, Barbier2017}, we then have that the signal model (\ref{equi_mod}) in each sub-block of second phase is statistically equivalent to (\ref{rmod}). Accordingly, based on the definition (\ref{def_Pe}), the probability $P_e$ with MAP criterion in the $K\to\infty$ limit can be formulated based on the matrix $\mathbf E^{\star}$ via (\ref{exp_pe}).

\bibliography{a}

\end{document}